\newtheorem{lemma}{Lemma}
\newtheorem{definition}{Definition}
\newtheorem{corollary}{Corollary}
\newcommand{\avg}[1]{\left \langle #1 \right\rangle}
\newcommand{\ket}[1]{\left | #1 \right\rangle}
\newcommand{\bra}[1]{\left \langle #1 \right |}
\newcommand{\Tr}{\mathrm{Tr}}
\newcommand{\abs}[1]{\left | #1 \right|}
\renewcommand{\epsilon}{\varepsilon}
\renewcommand{\O}[1]{\mathcal O\left(#1\right)}
\newcommand{\norm}[1]{\left\|#1\right\|}
\newcommand{\lnorm}[1]{\left\|#1\right\|_l}
\newcommand{\comm}[1]{\left[#1\right]}
\newcounter{para}
\newcommand{\dist}{\textrm{dist}}
\newcommand*\bigcdot{\mathpalette\bigcdot@{.5}}
\newcommand*\bigcdot@[2]{\mathbin{\vcenter{\hbox{\scalebox{#2}{$\m@th#1\bullet$}}}}}
\newcommand{\nmax}{\omega_*}
\newcommand{\qmax}{q_{\text{max}}}
\newcommand{\ad}{\text{ad}}
\newcommand{\Hpl}{\mathbb{H}_{\alpha}}
\newcommand{\dash}{\textemdash}
\newcolumntype{L}{>{$}l<{$}} 
\newcolumntype{C}{>{$}c<{$}} 
\newcolumntype{R}{>{$}r<{$}} 
\DeclarePairedDelimiter\ceil{\lceil}{\rceil}
\DeclarePairedDelimiter\floor{\lfloor}{\rfloor}
\newcommand*{\addFileDependency}[1]{
  \typeout{(#1)}
  \@addtofilelist{#1}
  \IfFileExists{#1}{}{\typeout{No file #1.}}
}
\newcommand{\etal}{{\it et al.}}
\begin{document}

\title{\vspace*{0in} Locality and Heating in Periodically Driven, Power-law Interacting Systems}
\author{Minh C.\ Tran}
\affiliation{Joint Center for Quantum Information and Computer Science, NIST/University of Maryland, College Park, MD 20742, USA}
\affiliation{Joint Quantum Institute, NIST/University of Maryland, College Park, MD 20742, USA}
\affiliation{Kavli Institute for Theoretical Physics, University of California, Santa Barbara, CA 93106, USA}
\author{Adam Ehrenberg}
\affiliation{Joint Center for Quantum Information and Computer Science, NIST/University of Maryland, College Park, MD 20742, USA}
\affiliation{Joint Quantum Institute, NIST/University of Maryland, College Park, MD 20742, USA}
\author{Andrew Y. Guo}
\affiliation{Joint Center for Quantum Information and Computer Science, NIST/University of Maryland, College Park, MD 20742, USA}
\affiliation{Joint Quantum Institute, NIST/University of Maryland, College Park, MD 20742, USA}
\author{Paraj Titum}
\affiliation{Joint Center for Quantum Information and Computer Science, NIST/University of Maryland, College Park, MD 20742, USA}
\affiliation{Joint Quantum Institute, NIST/University of Maryland, College Park, MD 20742, USA}
\affiliation{Johns Hopkins University Applied Physics Laboratory, Laurel, Maryland 20723, USA}
\author{Dmitry A. Abanin}
\affiliation{Department of Theoretical Physics, University of Geneva, 1211 Geneva, Switzerland}
\author{Alexey V.\ Gorshkov}
\affiliation{Joint Center for Quantum Information and Computer Science, NIST/University of Maryland, College Park, MD 20742, USA}
\affiliation{Joint Quantum Institute, NIST/University of Maryland, College Park, MD 20742, USA}
\affiliation{Kavli Institute for Theoretical Physics, University of California, Santa Barbara, CA 93106, USA}
\date{\today}
\begin{abstract}
We study the heating time in periodically driven $D$-dimensional systems with interactions that decay with the distance $r$ as a power-law $1/r^\alpha$.
Using linear response theory, we show that the heating time is exponentially long as a function of the drive frequency for $\alpha>D$.
For systems that may not obey linear response theory, we use a more general Magnus-like expansion to show the existence of quasi-conserved observables, which imply exponentially long heating time, for $\alpha>2D$.
We also generalize a number of recent state-of-the-art Lieb-Robinson bounds for power-law systems from two-body interactions to $k$-body interactions and thereby obtain a longer heating time than previously established in the literature. 
Additionally, we conjecture that the gap between the results from the linear response theory and the Magnus-like expansion does not have physical implications, but is, rather, due to the lack of tight Lieb-Robinson bounds for power-law interactions.
We show that the gap vanishes in the presence of a hypothetical, tight bound. 

\end{abstract}\maketitle
\newpage
\newpage
\section{Introduction}
Periodically driven systems can host interesting non-equilibrium physics, such as Floquet topological insulators~\cite{Cayssol2013}, time crystals~\cite{Moessner2017,Else2019TimeCrystal}, and anomalous Floquet phases~\cite{Harper2019}. 
However, most driven systems eventually heat up to equilibrium, infinite-temperature states, erasing the interesting features in the process.

The timescale before heating becomes appreciable in periodically driven systems is known as the \emph{heating time}, and it generally exhibits a nontrivial dependence on the frequency of the drive, $\omega$.
Previous works~\cite{Abanin15,Abanin17PRB,Abanin17CMP,Kuwahara2016} established that finite-range interacting systems under rapid, local~\footnote{The drive is local if it can be written as a sum of local terms.}, periodic drives could not heat up until after a time $t_*=e^{\O{\omega}}$ that is exponentially long in the drive frequency $\omega$.
This slow \emph{heating rate} stems at least in part from the locality of the interactions, which constrains the probability that distant particles collectively absorb an energy quantum $\hbar \omega$.

This result also applies to systems with long-range interactions that decay with the distance $r$, e.g.~as a power-law $1/r^\alpha$. 
Such systems are of great interest as they can be implemented in a wide variety of experiments, such as trapped ions~\cite{Britton2012,Kim2011}, Rydberg atoms~\cite{Saffman10}, ultracold atoms and molecules~\cite{Douglas2015,Yan2013}, nitrogen-vacancy centers~\cite{Maze2011}, and superconducting circuits~\cite{Otten16}. 
On the theoretical side, for spin systems with disordered, sign-changing power-law couplings, Ref.~\cite{Abanin18} demonstrated the exponentially-suppressed heating rate when $\alpha>D/2$, where $D$ is the dimensionality of the system.
Furthermore, Ref.~\cite{Kuwahara2016} proved an exponential heating time $t_*=e^{\O{\omega}}$ for general power-law interactions with $\alpha>2D$.  
In contrast, for $D<\alpha<2D$, Ref.~\cite{Kuwahara2016} only obtained a linear heating time $t_*=\O{\omega}$, while numerical evidence~\cite{Machado2017} suggests that the heating time is still exponential within this regime of $\alpha$.

In this paper, we study the heating time in periodically driven, power-law interacting systems with $\alpha>D$ from two different perspectives.
Within linear response theory, we show that such systems only heat up after some time exponentially large as a function of the drive frequency.
This result mirrors the statement established for finite-range interactions in Ref.~\cite{Abanin15} and extends Ref.~\cite{Abanin18} to systems without disorder (though at the expense of a smaller range of valid $\alpha$).
The result also matches the numerical evidence in Ref.~\cite{Machado2017}. 
For generic periodically driven, power-law interacting systems that may not obey the linear response theory\dash such as those under a strong drive\dash we generalize Ref.~\cite{Abanin17PRB} and construct an effective time-independent Hamiltonian with power-law interactions.
This Hamiltonian closely describes the dynamics of the driven system up to time $t_*$, where $t_*$ is exponentially large as a function of the drive frequency.
We thereby show that the system cannot heat up until at least after this timescale.

We note that, although our generalization of Ref.~\cite{Abanin17PRB} is different from Ref.~\cite{Kuwahara2016}, it is similar in spirit to their arguments. 
While Ref.~\cite{Kuwahara2016} mainly focused on finite-range interactions, their construction of the effective Hamiltonian by truncating the Magnus series would also apply to power-law systems.
However, our approach here also provides insights into the structure of the effective Hamiltonian. 
In particular, we show that the effective Hamiltonian is also power-law with the same exponent $\alpha$ as the driven Hamiltonian.
Furthermore, we prove a stronger, albeit still exponential in $\omega$, bound on the heating time than one would get from the argument in Ref.~\cite{Kuwahara2016}. 
This improvement relies on the use of state-of-the-art Lieb-Robinson bounds~\cite{Else18,Tran18}, which we develop for this purpose. 
In particular, through a new technique, we generalize the bound in Ref.~\cite{Tran18} from two-body to many-body interactions.

Similarly to Ref.~\cite{Kuwahara2016}, our construction requires $\alpha>2D$, in contrast to the numerical evidence in Ref.~\cite{Machado2017} and to the wider range of validity $\alpha>D$ found in the linear response theory.
Because both Ref.~\cite{Kuwahara2016} and this paper crucially rely on Lieb-Robinson bounds to prove that the heating time is at least exponential in $\omega$, we conjecture that the aforementioned gap stems from the lack of a tight Lieb-Robinson bound for $\alpha>D$, and we show the gap would vanish if such a tight bound were to exist. 
While the linear response theory also utilizes Lieb-Robinson bounds, it has weaker assumptions and, therefore, does not require a tighter bound to achieve the desired result of exponentially suppressed heating for $\alpha > D$. 

The remainder of the paper is organized as follows. 
In \cref{sec:Setup}, we provide definitions and describe the systems of interest. 
In \cref{sec:kLRB}, we review various Lieb-Robinson bounds for power-law systems and extend two of them\dash including one with the tightest light cone known to date\dash to $k$-body interactions. 
In \cref{sec:PRL}, we prove that in the linear response regime the heating time is at least exponential in $\omega$ for all $\alpha > D$. 
In \cref{sec:PRB}, we provide a more general analysis using the Magnus-like expansion and existing Lieb-Robinson bounds to prove exponentially-long heating times for $\alpha > 2D$.
We also conjecture a tight Lieb-Robinson bound that would extend this range of validity to $\alpha > D$. Finally, we summarize and discuss potential improvements in \cref{sec:Conclusion}.
\section{Setup and Definitions}\label{sec:Setup}
We consider a system of $N$ spins in a $D$-dimensional square lattice \footnote{While our results are derived considering a simple square lattice, we believe that it is not difficult to extend them to other regular lattices.}. 
The system evolves under a periodic, time-dependent Hamiltonian $H(t)$ with period $T$, i.e. $H(t+T) = H(t)$.
While the following analysis works for any $H(t)$ that is a sum of finite-body interactions, we assume that $H(t)$ is two-body for simplicity. 
Without loss of generality, we can write $H(t)=H_0+V(t)$ as the sum of a time-independent part $H_0$ and a time-dependent part $V(t)$ such that $\frac{1}{T}\int_0^T V(t) = 0$.
We further assume that $H_0$ and $V(t)$ are both power-law Hamiltonians with an exponent $\alpha$.

\begin{definition}\label{DEF_power-law}
	A Hamiltonian $H$ on a lattice $\Lambda$ is power-law with an exponent $\alpha$ and a local energy scale $\eta$ if we can write
	$H = \sum_X h_X,$
	where $h_X$ are Hamiltonians supported on subsets $X \subset \Lambda$, such that for any two distinct sites $i,j \in \Lambda$:
	\begin{align}
	\sum_{X\ni i,j} \norm{h_{X}}\leq \frac{\eta}{\dist(i,j)^\alpha},
	\end{align}
	and the norm $  \norm{h_{\{i\}}} \leq \eta$ for all $i \in \Lambda$, where $\norm{\cdot}$ denotes the operator norm and $\dist(i,j)$ the distance between sites $i,j$.
	In addition, we call $ \sup_X \abs{X}$ the local support size, where $\abs{X}$ is the number of sites in $X$, and define $\lnorm{H}= \sup_{i}\sum_{X\ni i}\norm{h_X}$ to be the local norm of $H$.
\end{definition}
In the following discussion, we assume $\eta  =1$, which sets the timescale for the dynamics of the system. 
In addition, we will occasionally write $H$ instead of $H(t)$ for brevity.
\section{Lieb-Robinson bounds for many-body power-law interactions} \label{sec:kLRB}
Before discussing the linear response theory and the Magnus-like expansion, it is helpful to review the existing Lieb-Robinson (LR) bounds for power-law interactions.
We will also generalize several of them from two-body to arbitrary $k$-body interactions for all $k\geq 2$. 
In particular, we discuss the relations between the bounds in Refs.~\cite{Hastings06,Gong14}, which imply logarithmic light cones for all $\alpha>D$, and the bounds in Refs.~\cite{Foss-Feig15,Else18,Tran18}, which imply algebraic light cones for $\alpha>2D$. 

\subsection{Lieb-Robinson bounds for $\alpha>D$}\label{sec:LRalpha>D}
First, we discuss the bounds in Refs.~\cite{Hastings06,Gong14}, which are valid for all $\alpha>D$.
Recall that LR bounds are upper bounds on the norm of the commutator $\comm{A(t),B}$, where $A,B$ are two operators supported on some subsets $X,Y$ of the lattice, and $A(t)$ is the time-evolved version of $A$ under a time-dependent Hamiltonian $H$.
The minimum distance between a site in $X$ and a site in $Y$ is $r = \dist(X,Y)>0$.
Since the sets $X,Y$ are disjoint, $\comm{A(0),B} = 0$ initially. 
As time grows, the operator $A(t)$ may spread to $Y$, making the commutator nontrivial.

The first LR bound for power-law interactions was proven in Ref.~\cite{Hastings06} by Hastings and Koma (HK):
\begin{align}
    \mathcal C(t,r)\equiv\norm{\comm{A(t),B}}\leq C \norm A \norm B \abs{X}\abs{Y} \frac{e^{vt}}{r^{\alpha}},\label{eq:LR-HK}
\end{align}
where $r = \dist(X,Y)$, $v\propto \eta$ is a constant that may depend on $\alpha$, and $C$ is a constant independent of the system. 
We shall also use the same $C$ to denote different inconsequential prefactors. 
Setting the commutator norm to a constant yields the light cone $t\gtrsim \log r$, which means it takes time at least proportional to $\log r$ for the commutator to reach a given constant value.

Technically, we can already use the HK bound in our later analysis of the heating time because it applies to $k$-body interactions for all $k$.
However, this bound is loose for large $\alpha$ for two reasons:
i) the velocity $v\propto 2^\alpha$ diverges for $\alpha\rightarrow \infty$,
ii) the light cone is logarithmic for all $\alpha$, which is unphysical since larger values of $\alpha$ correspond to shorter-range interactions and, therefore, we expect slower spreading of correlations.
In particular, we expect the light cone to become linear for large enough $\alpha$, given that the interactions are finite-range at $\alpha =\infty$.

Gong~\etal~\cite{Gong14} resolved the first challenge and derived a bound for two-body interactions:
\begin{align}
    \mathcal C(t,r)\leq C \norm A \norm B \abs{X}\abs{Y} 
    \left(
    \frac{e^{vt}}{[(1-\mu)r]^{\alpha}}
    +e^{vt-\mu r}
    \right)
    ,\label{eq:LR-ZX}
\end{align}
where $\mu\in (0,1)$ is an arbitrary constant.
The light cone implied by this bound is still logarithmic, but the velocity $v$ is finite for all $\alpha$.
Although the bound in Ref.~\cite{Gong14} was derived only for two-body interactions, their proof applies to arbitrary $k$-body interactions, where $k$ is a finite integer [See \Cref{sec:Gong} for a proof].

\subsection{Lieb-Robinson bounds for $\alpha>2D$}
In this section, we discuss the LR bounds for power-law interactions with $\alpha>2D$.
While the bounds in Ref.~\cite{Hastings06,Gong14} work for $\alpha>D$, they all have logarithmic light cones.
For $\alpha>2D$, it is possible to derive tighter algebraic light cones.
The first such bound was proven by Foss-Feig~\etal~\cite{Foss-Feig15} for two-body interactions (and generalized by Refs.~\cite{Matsuta2017,Else18} to $k$-body interactions for all $k\geq 2$).
A recent bound by Tran~\etal~\cite{Tran18}, however, gives a tighter algebraic light cone.
Here, we provide the generalization of that bound to $k$-body interactions:
\begin{align}
    \mathcal C(t,r)\leq &C \norm A \norm B  (r_0+r)^{D-1} \nonumber\\
    &\times\left(\frac{1}{(1-\mu)^\alpha}
    \frac{t^{\alpha-D}}{r^{\alpha-D-1}}
    +t  e^{-\frac{\xi r}{t}}
    \right)
    ,\label{eq:LR-Minh-constX}
\end{align}
where $r_0$ is the radius of the smallest ball that contains $X$ and $\mu, \xi\in(0,1)$ are arbitrary constants. 
The second term decays exponentially with $r/t$ and becomes negligible compared to the first term when $r\gg t$.
Note that, other than its dependence on $r_0$, this bound is independent of the size of $X,Y$ and is valid for $\alpha>2D$.

Before we present the proof of \cref{eq:LR-Minh-constX}, we summarize the key steps of the proof:
\begin{enumerate}
    \item First, divide $[0,t]$ into $M$ equal time intervals and define $t_0,t_1,\dots,t_M$ such that $t_{0} = 0$ and $t_{j+1}-t_j = \tau = t/M$. 
    We denote by $U_{t_{i},t_{j}}$ the evolution unitary of the system from time $t_i$ to $t_j$.
    \item Setting $U_j = U_{t_{M-j},t_{M-j+1}}$ for brevity, we can decompose the evolution of $A$ into $M$ timesteps:
    \begin{align}
        A(t) = U_M^\dag U_{M-1}^\dag\dots U_1^\dag A U_1 \dots U_{M-1} U_M.
    \end{align}
    \item We then use a truncation technique (explicitly described below) to approximate $U_1^\dag A U_1$ by some operator $A_1$ such that
    \begin{align}
        \norm{U_1^\dag A U_1 - A_1} = \epsilon_1,
    \end{align}
    and $A_1$ is supported on a ball of size at most $\ell$ larger than the size of the support of $A$.
    \item Repeat the above approximation for the other time slices, i.e. find $A_2,\dots,A_M$ such that 
    \begin{align}
        &\norm{U_2^\dag A_1 U_2 - A_2} = \epsilon_2,\\
        &\norm{U_3^\dag A_2 U_3 - A_3} = \epsilon_3,\\
                &\dots \nonumber\\
        &\norm{U_M^\dag A_{M-1} U_M - A_M} = \epsilon_M.
    \end{align}
    By the end of this process, we have approximated $A(t)$ by an operator $A_M$ whose support is at most $M\ell$ larger than the support of $A$.
    \item By choosing $M\ell$ just smaller than $r$, the support of $A_M$ does not overlap with the support of $B$. 
    Therefore, $\comm{A_M,B}=0$, and $\mathcal{C}(t,r)$ is at most the total error of the approximation, i.e.
    \begin{align}
        \epsilon = \epsilon_1+\dots+\epsilon_M.
    \end{align}
\end{enumerate}

The total error $\epsilon$, and hence the bound, depends on the truncation technique used in Step 3. 
In Ref.~\cite{Tran18}, the authors used a technique inspired by digital quantum simulation, which works for $\alpha>2D$.
However, in addition to truncating the evolution unitary, the technique in Ref.~\cite{Tran18} also truncates the Hamiltonian.
The large error from this truncation makes it difficult to further improve the bound.
Here, we use a different, simpler technique to generalize the bound in Ref.~\cite{Tran18} to $k$-body interactions for all $k\geq 2$.
Our technique does not require truncating the Hamiltonian, eliminating a hurdle for future improvements on the bound \footnote{We note that the approach in Ref.~\cite{Tran18} also gives the effective Hamiltonian that generates the evolution from $A$ to $A_M$, which is more useful than the technique presented here when knowing such a Hamiltonian is important, e.g. in digital quantum simulation.}.

Let us start without any assumption on the interactions of the system. 
We only assume that there already exists a bound on the commutator norm for the system:
\begin{align}
    \mathcal C (t,r) \leq f(t,r)\phi(X)\norm{A}\norm{B},\label{eq:old_LR}
\end{align}
for some function $f$ that increases with $t$ and decreases with $r$, where $\phi(X)$ is the boundary area of $X$. 

To truncate $U_1^\dag A U_1$, we simply trace out the part of $U_1^\dag A U_1$ that lies outside a ball of radius $\ell$ around the support of $A$~\cite{BravyiHV06}:
\begin{align}
    A_1 &\equiv\frac{1}{\Tr(\mathbb I_{\mathcal B_\ell(A)^c})} \Tr_{\mathcal B_\ell(A)^c} (U_1^\dag A U_1) \otimes \mathbb I_{\mathcal B_\ell(A)^c} \label{eq:traceA1} \\
    & = \int_{\mathcal B_\ell(A)^c} d\mu(W) W (U_1^\dag AU_1) W^\dag,\label{eq:traceint}
\end{align}
where $\mathcal B_\ell(A)$ is a ball of radius $\ell+r_0$ centered on $A$ and $X^c$ denotes the complement of the set $X$.
In \cref{eq:traceint}, we rewrite the trace over $\mathcal B_\ell(A)^c$ as an integral over the unitaries $W$ supported on $\mathcal B_\ell(A)^c$ and $\mu(W)$ denotes the Haar measure for the unitaries.
Effectively, $A_1$ is the part of $A$ that lies inside the ball $\mathcal B_\ell(A)$.
The error from approximating $U_1^\dag A U_1$ with $A$ is
\begin{align}
    &\epsilon_1=\norm{U_1^\dag A U_1 - A_1} \nonumber\\
    &= \norm{U_1^\dag A U_1 - \int_{\mathcal B_\ell(A)^c} d\mu(W) W (U_1^\dag AU_1) W^\dag}\nonumber\\
    &= \norm{\int_{\mathcal B_\ell(A)^c} d\mu(W) \left[U_1^\dag A U_1 - W (U_1^\dag AU_1) W^\dag\right]}\nonumber\\
    &\leq \int_{\mathcal B_\ell(A)^c} d \mu (W) \norm{\comm{U_1^\dag A U_1,W}}.
\end{align}
Note that $W$ is a unitary whose support is at least a distance $\ell$ from the support of $A$. 
Therefore, using the LR bound in \cref{eq:old_LR}, we have
\begin{align}
    \epsilon_1= \norm{U_1^\dag A U_1 - A_1}   &\leq \int_{\mathcal B_\ell(A)^c} d \mu (W) \norm{A} \phi(X) f(\tau,\ell)\nonumber\\
    &=\norm{A}  \phi(X)f(\tau,\ell) ,
\end{align}
where $\tau$ is the time interval of each time slice. 
In addition, it is clear from the definition of $A_{1}$ in \cref{eq:traceint} that $\norm{A_1}\leq \norm{A}$.
Therefore, the error of the approximation in the $j$-th time slice is at most
\begin{align}
    \epsilon_j \leq  \norm A \phi(X_{j-1}) f(\tau,\ell),
\end{align}
where $X_j$ is the support of $A_j$.
Thus, the new bound is
\begin{align}
    \mathcal C(t,r)\leq 2\norm{B}\epsilon &\leq 2 M \norm A \norm B  \phi_{\max}f(\tau,\ell)  \\
    &=2 \norm A\norm B \frac{t}{\tau} \phi_{\max} f(\tau,\ell),\label{EQ_bound_tl}
\end{align}
where $\phi_{\max}=\max_j \phi(X_j)$ and $M$ has been replaced by ${t}/{\tau}$.
Note that the above bound is valid for all choices of $t,\ell$, as long as
\begin{align}
    &M=\frac{t}{\tau}  < \frac{r}{\ell},\label{EQ_cond_1}\\
    &\ell\geq 1,\label{EQ_cond_2}\\
    &\tau \leq t. \label{EQ_cond_3}
\end{align}
The first condition ensures that the operator after the last time slice $A_M$ is still outside the support of $B$, while the last two are practical constraints. 

\Cref{EQ_cond_1} is equivalent to $\ell<r\tau/t$.
Because $f(\tau,\ell)$ is a decreasing function of $\ell$, the bound \cref{EQ_bound_tl} would be the tightest if we chose $\ell = \xi r\tau/t$ for some $\xi$ less than, but very close to, $1$.
The bound \cref{EQ_bound_tl} becomes
\begin{align}
    \mathcal C (t,r)
    &\leq  2\norm A\norm B \phi_{\max} \:f\left(\tau,\frac{\xi r \tau}{t}\right)  \frac{t}{\tau}  .\label{EQ_general_bound} 
\end{align}
Note that the only free parameter left is $\tau$, which is constrained by [see \cref{EQ_cond_1,EQ_cond_2,EQ_cond_3}]:
\begin{align}
    t \geq \tau > \frac{t}{r}.\label{EQ_cond_t}
\end{align}
We are now ready to generalize the bound in Ref.~\cite{Tran18} to many-body interactions.
Plugging the $k$-body generalization of \cref{eq:LR-ZX} [see \cref{eq:LR-ZX-noY-noX} in \cref{sec:Gong}] into \cref{EQ_general_bound}, we have
\begin{align}
    \mathcal C(t,r) &\leq C\norm A\norm B \phi_{\max} \frac{t}{\tau}\nonumber\\
    &\times\left(\frac{1}{(1-\mu)^\alpha}\frac{e^{v\tau}}{\left(\frac{ \xi r \tau}{t}\right)^{\alpha-D-1}}+e^{v\tau-\frac{\xi r \tau}{t}}\right)\nonumber\\
    &\leq C\norm A\norm B (r_0+r)^{D-1} \frac{t}{\tau}\nonumber\\
    &\times\left(\frac{1}{(1-\mu)^\alpha}\frac{e^{v\tau}}{\tau^{\alpha-D-1}}\left(\frac{t}{ r }\right)^{\alpha-D-1}+e^{v\tau-\frac{\xi r \tau}{t}}\right),\nonumber
\end{align}
where we have assumed without loss of generality that $X$ is a ball of radius $r_0$ and replaced $\phi_{\max}\propto (r_0+r)^{D-1}$.
Taking $\tau=1$ to be a constant, we obtain a bound that is valid for all $\alpha>D+1$:
\begin{align}
    \mathcal C(t,r)  
    &\leq C\norm A\norm B (r_0+r)^{D-1} \nonumber\\
    &\quad\quad\times\bigg(\frac{1}{(1-\mu)^\alpha}\frac{t^{\alpha-D}}{r^{\alpha-D-1}}+ te^{-\frac{\xi r}{t}}\bigg).\nonumber
\end{align}
In particular, if $r_0$ is a constant, we can simplify (in the limit of large $t,r$) to
\begin{align}
    \mathcal C(t,r) 
    &\leq C \norm A\norm B  \left(\frac{1}{(1-\mu)^\alpha}\frac{t^{\alpha-D}}{r^{\alpha-2D}}+tr^{D-1}e^{-\frac{\xi r}{t}}\right).\label{eq:Minh-r0=1}
\end{align}
Note that although the bound is, in principle, valid for $\alpha>D+1$, it is only useful for $\alpha>2D$.

\section{Linear response theory}\label{sec:PRL}
In this section, we present the derivation of an exponentially suppressed heating rate for periodically driven, power-law Hamiltonians under the assumptions of linear response theory.
We will assume that the drive $V(t)$ is harmonic and local.
That is, we can write $V(t) = g \cos(\omega t) O$, for some small constant $g$ and some time-independent operator $O= \sum_i O_i$ composed of local operators $O_i$.
For simplicity, we assume each $O_i$ is supported on a single site $i$ (but our results also hold when $O_i$ is supported on a finite number of sites around $i$). 
We also assume the system is initially in a thermal state $\rho_\beta$ of $H_{0}$ with a temperature $\beta^{-1}$.
Within the linear response theory, the energy absorption rate is proportional to the dissipative (imaginary) part of the response function $\sigma(\omega) = \sum_{i,j}\sigma_{ij}(\omega)$~\cite{Abanin15}, where 
	\begin{align}
		\sigma_{ij}(\omega) &= \frac{1}{2}\int_{-\infty}^{\infty}dt e^{i\omega t}\avg{\left[ O_{i}(t), O_{j}(0) \right]}_{\beta}, 
	\end{align}
$\avg{O}_\beta \equiv \Tr(\rho_\beta O)$ denotes the thermal average of $O$, and $O(t) = e^{iH_0t}Oe^{-iH_0t}$ is the time-evolved version of $O$ under $H_0$.

The authors of Ref.~\cite{Abanin15} showed that there exists a constant $\kappa$ such that for all $i,j$ and for all $\omega,\delta\omega>0$, the $(i,j)$ entry of $\sigma(\omega)$ can be bounded as
\begin{equation}\label{EQN:LR_Exp_bound}
	\abs{\sigma_{ij}([\omega, \omega + \delta\omega])} \leq e^{-\kappa \omega},
\end{equation}
where $f([\omega_1,\omega_2])\equiv\int_{\omega_1}^{\omega_2} f(\omega)d\omega$.
Although the statement of Ref.~\cite{Abanin15} applies to Hamiltonians with finite-range interactions, we show in \cref{sec:sigmaij_bound_proof} that it also holds for power-law Hamiltonians for all $\alpha \geq 0$.

In principle, \cref{EQN:LR_Exp_bound} already implies that the absorption rate of a finite system is exponentially small as a function of the frequency $\omega$.
However, since there are $N$ sites in the system, naively applying \cref{EQN:LR_Exp_bound} by summing over the indices $i,j$ yields a superextensive heating rate $\sim N^{2}e^{-\kappa \omega}$.
Such superextensivity is non-physical, as it would imply that a local drive instigates a divergent absorption per site in the thermodynamic limit. 
To address this, Ref.~\cite{Abanin15} introduced a bound complementary to \cref{EQN:LR_Exp_bound}\dash based on Lieb-Robinson bounds for finite-range interactions~\cite{LR}\dash that implies the contribution from the off-diagonal terms is also \emph{exponentially} suppressed with the distance $r_{ij}$ between the sites~$i,j$.

The case of power-law interacting Hamiltonians is somewhat more involved.
Due to the long-range interaction, the commutator $\avg{\comm{O_i(t),O_j}}_\beta$ can decay more slowly as a function of $r_{ij}$ than in the finite-range case.
Fortunately, we show that it still decays sufficiently quickly for us to recover the extensive, exponentially-small heating rate for power-law Hamiltonians.
We provide the technical proof in \cref{sec:linear-response-bound-proof}, but a high-level argument goes as follows.

Lieb-Robinson bounds for power-law systems with $\alpha > D$~\cite{Hastings06,Gong14,Else18,Tran18} imply that the contributions from the $(i,j)$ entries are suppressed by $1/r_{ij}^\alpha$.
Therefore, the total contribution to $\sigma([\omega,\omega+\delta\omega])$ from the pairs $(i,j)$ with $r_{ij}$ larger than some distance $r_*$ (to be chosen later) is at most
\begin{align}
	\sum_{i,j:r_{ij}\geq r_*}\frac{C}{r_{ij}^{\alpha}}
	\leq \frac{CN}{r_*^{\alpha-D}}, \label{eq:sigma_bound_out}
\end{align} 
where we use the same notation $C$ to denote different constants that are independent of $r_{ij}, t, $ and $N$.
The factor $N$ comes from summing over $i$ and the factor $r_*^D$ from summing over $j$ at least a distance $r_*$ from $i$.

For $r_{ij}\leq r_*$, we simply use the bound in \cref{EQN:LR_Exp_bound} to bound their contributions:
\begin{align}
	\sum_{i,j:r_{ij}\leq r_*} C e^{-\kappa\omega} 
	\leq C N r_*^D e^{-\kappa \omega},\label{eq:sigma_bound_in}
\end{align}
where $Nr_*^D$ is roughly the number of pairs $(i,j)$ separated by distances less than $r_*$.
Combining \cref{eq:sigma_bound_out} with \cref{eq:sigma_bound_in}, we get
$
	\abs{\sigma([\omega,\omega+\delta\omega])}
	\leq CNr_*^D\left(e^{-\kappa \omega}+r_*^{-\alpha}\right).
$
Finally, choosing $r_* = \exp(\kappa\omega/\alpha)$, we obtain a bound on the absorption rate,
\begin{align}
	\abs{\sigma([\omega,\omega+\delta\omega])}
	\leq CN\exp\left[-\left(1-\frac{D}{\alpha}\right)\kappa\omega\right],\label{eq:linear-response-bound}
\end{align}
which decays exponentially quickly with $\omega$ as long as $\alpha>D$.
Thus, we have shown that, within the linear response theory, the heating rate of power-law interacting Hamiltonians obeys a bound that is qualitatively similar to that for finite-range interactions: the heating rate is extensive, but exponentially small in the driving frequency.

\section{Magnus-like expansion}\label{sec:PRB}
We now present a more general approach to proving a bound on the heating time in a system governed by a periodically driven, power-law Hamiltonian. 
In particular, this approach remains correct for strongly driven systems, where linear-response theory does not apply. 
We generalize Ref.~\cite{Abanin17PRB} and construct an effective time-independent Hamiltonian $H_*$.
The leading terms of $H_*$ resemble the effective Hamiltonian one would get from the Magnus expansion~\cite{Blanes2009,Bukov2015,Eckardt2015}. 
Using Lieb-Robinson bounds for power-law interactions, we show that the evolution of local observables under $H_*$ well approximates the exact evolution up to time $t_*$, which is exponentially long as a function of the drive frequency.
Additionally, the existence of the effective Hamiltonian $H_*$ also implies a prethermalization window during which the system could thermalize with respect to $H_*$ before eventually heating up after time $t_*$.

Following Ref.~\cite{Abanin17PRB}, we construct a periodic unitary transformation $Q(t)$ such that $Q(t+T) = Q(t)$ and $Q(0) = \mathbb I$.
After moving into the frame rotated by $Q(t)$, we show that the transformed Hamiltonian is nearly time-independent and the norm of the residual time-dependent part is exponentially small as a function of the frequency.

To construct the unitary $Q(t)$, we note that the state of the system in the rotated frame, $\ket{\phi(t)} = Q^{\dag}(t)\ket{\psi(t)}$, obeys the Schr\"{o}dinger equation with a transformed Hamiltonian $H'(t)$ ($\hbar = 1$):
\begin{equation}
i \partial_{t} \ket{\phi(t)}  
= (Q^{\dag}HQ - iQ^{\dag}\partial_{t}Q)\ket{\phi(t)}
\equiv H'(t)\ket{\phi(t)}.\label{eq:H'def}
\end{equation}
We write $Q = e^\Omega$, where $\Omega(t)$ is a periodic, anti-Hermitian operator, i.e. $\Omega(t) = \Omega(t+T)$ and $\Omega^\dag = -\Omega$. 
We then assume that the period $T$ is small so that we may expand $\Omega(t) = \sum_{q=1}^{\infty} \Omega_{q}$ in orders of $T$, where $\norm{\Omega_q}=\O{T^q}$, and we will eventually choose $\Omega_{q}$ such that the transformed Hamiltonian $H'(t)$ is almost time-independent. 
In particular, we shall truncate the expansion of $\Omega(t)$ up to order $\qmax$ and choose $\Omega_q$ recursively for all $q\leq\qmax$ to minimize the norm of the driving term in $H'(t)$. 

We can rewrite $H'(t)$ from \cref{eq:H'def} as:
\begin{equation}
H'(t) = e^{-\ad_{\Omega}}[H_{0}+V(t)] - i\frac{1-e^{-\ad_{\Omega}}}{\ad_{\Omega}}\partial_{t}\Omega 
, \label{eq:H'sum}
\end{equation}
with \(\ad_{\Omega}A = [\Omega,A]\). 
From \cref{eq:H'sum}, we can define $H'_q(t)$ for $q=0,1,\dots$ such that $H' = \sum_{q = 0}^{\infty}H'_q(t)$ is expanded in powers of $T$:
\begin{align}
&H'_{q}(t) = G_{q}(t) - i\partial_{t}\Omega_{q+1}(t) ,  \label{eq:Hq} 
\end{align}
where we define $G_q$ via $\Omega_1,\dots,\Omega_q$ as follows:
{
\medmuskip=-0mu
\thinmuskip=-0mu
\thickmuskip=-0mu
\nulldelimiterspace=0pt
\scriptspace=0pt
\begin{align}	
&G_{q}(t) 
= 
\sum_{k=1}^{q}\frac{(-1)^{k}}{k!}\sum_{\substack{1 \leq i_{1}, \dots, i_{k} \leq q \\ i_{1} + \dots + i_{k} = q}} \ad_{\Omega_{i_{1}}}\dots \ad_{\Omega_{i_{k}}} H    (t)  
\nonumber\\
&+ i\sum_{k=1}^{q}\frac{(-1)^{k+1}}{(k+1)!} \sum_{\substack{1\leq i_{1}, \dots, i_{k},m \leq q+1 \\ i_{1} + \dots + i_{k} + m = q+1}} \ad_{\Omega_{i_{1}}}\dots\ad_{\Omega_{i_{k}}}\partial_{t}\Omega_{m} ,
 \label{eq:Gq}
\end{align}
\medmuskip=1mu
\thinmuskip=1mu
\thickmuskip=1mu
\nulldelimiterspace=1pt
\scriptspace=1pt
and $G_0(t) = H(t)$.}
Now, recall that $\Omega_{q}(t)$ are operators that we can choose.
From \cref{eq:Hq}, we choose $\Omega_1(t)$ such that it cancels out the time-dependent part of $G_0(t)$, making $H'_0$ time-independent.
This choice of $\Omega_1(t)$ also defines $G_1(t)$.
We then choose $\Omega_2(t)$ to eliminate the time-dependent part of $G_1(t)$. 
In general, we choose $\Omega_q$ successively from $q=1$ to some $q=\qmax$ (to be specified later) so that $H'_q$ are time-independent for all $q<\qmax$.
Therefore, the remaining time-dependent part of the transformed Hamiltonian $H'(t)$ must be at least $\mathcal{O}(T^{\qmax})$.
Specifically, for $q<\qmax$, we choose the following:
\begin{align}
\bar{H}'_{q} &= \frac{1}{T}\int_{0}^{T}G_{q}(t)dt, \label{Hqbar}\\
\Omega_{q+1}(t) &= -i\int_{0}^{t}\left(G_{q}(t')-\bar{H}'_{q}\right)dt'.      \label{Omegaq}
\end{align}
Here, \cref{Omegaq} ensures that \cref{eq:Hq} becomes $H'_q(t) = \bar H'_q$, and, thus, that $H'_q$ is time-independent for all $q<\qmax$.
On the other hand, for $q\geq \qmax$, we choose $\Omega_{q+1}(t) = 0$, so that
$H'_q(t) = G_q(t)$.
By this construction, we can rewrite the transformed Hamiltonian into the sum of a time-independent Hamiltonian $H_*$ and a drive $V'(t)$ that contains higher orders in $T$:
\begin{align}
H'(t) &=\sum_{q=0}^\infty H_q(t) = \underbrace{\sum_{q=0}^{\qmax-1}\bar{H'}_{q}}_{\equiv H_*} + \underbrace{\sum_{q = \qmax}^{\infty}G_{q}(t)}_{\equiv V'(t)}, \label{eq:Hprime}
\end{align}

As a result of the transformation, the driving term $V'(t)$ is now $\mathcal{O}(T^{\qmax})$.
As discussed before, we will eventually choose the cutoff $\qmax$ to minimize the norm of the residual drive $\norm{V'(t)}$.

To estimate the norm of $V'(t)$, elucidating its dependence on $\qmax$, we first need more information on the structure of the $\Omega_q(t)$ for all $1\leq q \leq \qmax$. 
In particular, we show that $G_q$ and $\Omega_{q}$ are power-law interacting Hamiltonians.
To do so, we first need to define some more notation. 
We denote by $\Hpl$ the set of power-law Hamiltonians with exponent $\alpha$ and a local energy scale $\eta = 1$.
In addition, we denote by $\Hpl^{(k)}$ the subset of $\Hpl$ which contains all power-law Hamiltonians whose local support size [see \cref{DEF_power-law}] is at most $k+1$.
For a real positive constant $a$, we also denote by $a\Hpl$ the set of Hamiltonians $H$ such that $a^{-1} H$ is a power-law Hamiltonian with the same exponent $\alpha$.

The following lemma says that $G_q$ and $\Omega_{q}$ are also power-law Hamiltonians up to a prefactor.
\begin{lemma}\label{LEM_Gq}
	For all $q< \qmax$, we have
	\begin{align}
		&G_q\in T^q q! c^{q} \lambda^q \Hpl^{(q+1)},\label{eq:GqBound}\\
		&\partial_t \Omega_{q+1} \in T^{q}q! c^{q}\lambda^q\Hpl^{(q+1)}, \label{eq:dtOmegaBound}\\
		&\Omega_{q+1} \in T^{q+1}q! c^{q}\lambda^q\Hpl^{(q+1)}, \label{eq:OmegaBound}
	\end{align}
	where $c,\lambda$ are constants to be defined later.
\end{lemma}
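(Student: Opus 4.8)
The plan is to prove \cref{LEM_Gq} by strong induction on $q$, establishing all three claims \cref{eq:GqBound,eq:dtOmegaBound,eq:OmegaBound} simultaneously. The base case $q=1$ is direct: $G_0(t) = H(t) = H_0 + V(t) \in \Hpl^{(1)}$ by assumption (it is two-body, hence local support size $\le 2$), so from \cref{Hqbar,Omegaq} we get $\bar H'_0 = \frac1T\int_0^T H(t)\,dt$ and $\Omega_1(t) = -i\int_0^t(H(t') - \bar H'_0)\,dt'$; the time integral over an interval of length at most $T$ contributes a factor $T$ to $\Omega_1$ (giving $\Omega_1 \in T^1 \cdot 1 \cdot \Hpl^{(1)}$ up to the constant $\lambda$ absorbing $\lnorm{H}$ and the "$H(t)-\bar H'$" subtraction), while $\partial_t\Omega_1 = -i(G_0(t) - \bar H'_0)$ inherits the power-law structure of $G_0$ with no extra $T$. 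The constant $c$ will be chosen in the inductive step; $\lambda$ can be taken to be (a constant times) $\lnorm{H} = \lnorm{H_0} + \sup_t\lnorm{V(t)}$ so that $H(t) \in \lambda\Hpl^{(1)}$.

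For the inductive step, assume \cref{eq:GqBound,eq:dtOmegaBound,eq:OmegaBound} hold for all indices $< q$; I want to deduce them for $q$. The engine is an algebraic lemma on how power-law structure behaves under (i) nested commutators and (ii) time integration. For (i), I need: if $H^{(1)} \in a_1\Hpl^{(k_1)}$ and $H^{(2)} \in a_2\Hpl^{(k_2)}$, then $[H^{(1)}, H^{(2)}] \in C a_1 a_2 \Hpl^{(k_1+k_2)}$ for some universal constant $C$ (depending on $\alpha, D$ but not on the Hamiltonians) — this is the standard "reproducibility" estimate for power-law interactions, using that $\sum_{z}\dist(x,z)^{-\alpha}\dist(z,y)^{-\alpha} \le C\dist(x,y)^{-\alpha}$ for $\alpha > D$, and the support sizes add. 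Iterating this $k$ times over the multi-index sum $i_1 + \dots + i_k = q$ in \cref{eq:Gq}: each term $\ad_{\Omega_{i_1}}\cdots\ad_{\Omega_{i_k}} H(t)$ lies in $C^k \prod_m (T^{i_m} i_m! c^{i_m-1}\lambda^{i_m}) \cdot \lambda \cdot \Hpl^{(\sum i_m + 1)}$ (here one should be slightly careful: in the second sum of \cref{eq:Gq} one factor is $\partial_t\Omega_m$, handled by \cref{eq:dtOmegaBound}). Collecting powers: $T^q$ comes out, $\lambda^q$ comes out, and the remaining numerical factor is $C^k \prod_m i_m! \cdot c^{q-k}$ summed over compositions, with a $1/k!$ (or $1/(k+1)!$) out front. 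The combinatorial fact $\sum_{k}\frac{1}{k!}\sum_{i_1+\dots+i_k=q}\prod_m i_m! \le q!$ (or a constant multiple thereof) is what lets the whole sum be bounded by $q!\,c'^{\,q}\lambda^q T^q$; choosing $c$ large enough relative to $C$ absorbs the $C^k \le C^q$ factors, closing the induction for \cref{eq:GqBound}. For (ii), \cref{eq:dtOmegaBound} is immediate since $\partial_t\Omega_{q+1} = -i(G_q(t) - \bar H'_q)$ and both $G_q$ and its time average lie in the same power-law class (averaging is a convex combination, which does not increase the power-law norm), and \cref{eq:OmegaBound} follows from \cref{eq:dtOmegaBound} because integrating over $t' \in [0,t] \subseteq [0,T]$ multiplies the local norm by at most $T$ while preserving support sizes.

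I expect the combinatorial bookkeeping in the inductive step to be the main obstacle — specifically, verifying that the nested sum of products of factorials over all compositions of $q$ telescopes into $q!$ times a geometric factor in a chosen constant $c$, and doing so uniformly over both sums in \cref{eq:Gq} (including the one with the shifted constraint $i_1 + \dots + i_k + m = q+1$ and a $\partial_t\Omega_m$ in place of $H(t)$). The support-size accounting ($|X| \le q+1$ after $q$ commutators starting from two-body terms, hence the superscript $(q+1)$) is straightforward but must be tracked carefully through both sums. Everything else — the reproducibility estimate for power-law commutators and the behavior under time averaging/integration — is routine once the constants $c$ and $\lambda$ are pinned down, and I would state the power-law commutator estimate as a separate preliminary lemma to keep the induction clean.
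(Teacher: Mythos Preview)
Your approach---strong induction on $q$, a commutator-preserves-power-law lemma, and a combinatorial estimate over compositions of $q$---is exactly the paper's. There is, however, a genuine gap in your commutator lemma. The correct statement (the paper's \cref{LEM_adHpl}) is that if $H_1 \in a\,\Hpl^{(k_1)}$ and $H_2 \in b\,\Hpl^{(k_2)}$ then $\ad_{H_1}H_2 \in ab\,\lambda\,k_{\max}\,\Hpl^{(k_1+k_2)}$ with $k_{\max}=\max(k_1,k_2)$; the constant is \emph{not} independent of the Hamiltonians but grows linearly with the local support size. The culprit in the proof is the case $i,j\in X$, $i,j\notin Y$ when estimating $\sum_{Z\ni i,j}\norm{h_Z}$: then $Y$ meets $X$ only at some third site $\ell\in X\setminus\{i,j\}$, and summing over the at most $k_1-1$ choices of $\ell$ is what produces the support-size factor.

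After $k$ nested commutators with each $k_{\max}$ bounded by $q$, this yields an extra $q^k$, so the combinatorial quantity you actually face is
\[
\sum_{k=1}^{q}\frac{q^k c^{-k}}{k!}\sum_{\substack{1\le i_1,\dots,i_k\\ i_1+\dots+i_k=q}}\prod_{m}(i_m-1)!
\]
(the factorial coming from $\Omega_{i_m}$ is $(i_m-1)!$, not $i_m!$). It is precisely the balance of $q^k$ against $c^{-k}$ against $(q-k)!$ that drives the $q!$ growth and requires the careful estimate the paper isolates as \cref{lem:c1}, showing the remaining prefactor can be made below $1$ by taking $c$ large. Had your universal-$C$ commutator lemma been correct, the same computation would have given $(q-1)!$ in place of $q!$---a sign that a $q$-dependent factor was dropped from the commutator step.
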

Observe that for any order $q$, the last two bounds, i.e. \cref{eq:dtOmegaBound} and \cref{eq:OmegaBound}, follow immediately from \cref{eq:GqBound} and the definition of $\Omega_q$.
Note that \cref{LEM_Gq} holds for $G_0(t) = H(t)\in \Hpl^{(1)}$.
It is also straightforward to prove \cref{LEM_Gq} inductively on $q$.
The factor $T^q$ comes from the constraint in \cref{eq:Gq} that $i_1+\dots+i_k=q$, along with the fact that each $\Omega_{i_\nu}$ is $\mathcal{O}(T^{i_\nu})$ for all $\nu=1,\dots,k$.
Similarly, the factor of $q!$ is combinatorial and comes from the nested commutators in \cref{eq:Gq}.
We provide a more technical proof of \cref{LEM_Gq} in \cref{sec:Gq}.

As a consequence of \cref{LEM_Gq}, we can bound the local norms of the operators:

\begin{align}
	\norm{G_q}_l &\leq T^q q! c^{q} \lambda^{q+1} \leq \lambda e \sqrt{q}\left( \frac{Tqc\lambda}{e} \right)^{q}, \label{GqNormBound}\\
	\norm{\Omega_q}_l &\leq T^{q} (q-1)! c^{q-1} \lambda^q
	\leq \frac{e}{c} \left(\frac{Tqc\lambda}{e}\right)^{q}. \label{OmegaqNormBound}
\end{align}
There are two competing factors in the bounds: $T^q$, which decreases with $q$, and $q!\sim q^{q}$, which increases with $q$.
This suggests that the optimal choice for $\qmax$\dash in order to minimize the local norm in \cref{GqNormBound}\dash should be around ${e/(cT\lambda)}$. 
In the following, we shall choose
\begin{align}\label{qmax}
	\qmax = \nmax \equiv \frac{e}{cT\lambda} e^{-\kappa},
\end{align}
for some $\kappa > \ln 2$. 
Note also that $\nmax$ is equal to frequency $\omega=1/T$ up to a constant.
With this choice of $\qmax$, \cref{eq:GqBound} reduces to 
\begin{align}
	G_q\in \lambda e \sqrt{q} e^{-\kappa q} \Hpl^{(q+1)},\label{eq:Gq<expsmall}
\end{align}
for all $q< \qmax =\nmax$.
By summing over $G_q$ with $q<\nmax$, we find that the effective time-independent Hamiltonian $H_{*}$ [see \cref{eq:Hprime}] is also a power-law Hamiltonian, i.e. 
$H_{*} \in C \Hpl^{(\qmax)}\in C \Hpl,$
up to a constant $C$ that may depend only on $\kappa$.

Similarly, we find from \cref{eq:OmegaBound} that $\Omega_q\in e/(c\lambda) e^{-\kappa q} \Hpl^{(q)}$ for all $q\leq \nmax$.
Plugging into the definition of $G_q$ and noting that we choose $\Omega_q=0$ for all $q\geq \qmax$, we find an identity similar to \cref{eq:Gq<expsmall}, but for $q\geq \nmax$:
\begin{align}
G_q\in C e^{-\kappa' q} \Hpl,\label{eq:Gq>expsmall}
\end{align}
where $\kappa'>\kappa-\ln 2$ is a constant.
Summing over $G_q$ with $q\geq\qmax$ [see \cref{eq:Hprime}], we again find that the residual drive $V'(t)$ is a power-law Hamiltonian up to a prefactor that decays exponentially with $\nmax$:
	\begin{align}
		V'(t) \in C e^{-\kappa'\nmax} \Hpl,\label{eq:V'}
	\end{align}
	where $C$ and $\kappa'$ are some positive constants.  
As a result, the local norm of $V'(t)$ decreases exponentially with $\nmax$:
$
	\norm{V'(t)}_l\leq C\lambda e^{-\kappa' \nmax}.
$

As discussed earlier, \cref{eq:Hprime} and \cref{eq:V'} imply the existence of an effective time-independent Hamiltonian $H_*$ such that the difference $\norm{Q^\dag H Q - H_*} = \norm{V'}$ is exponentially small as a function of $\nmax \propto 1/T$. 
However, even if $\norm{V'}_{l}$ is exponentially small, $\norm{V'}$ still diverges in the thermodynamic limit.
Therefore, in order to characterize the heating rate of the Hamiltonian, it is necessary to investigate the evolution of a local observable $O$ under $H(t)$.
We show that the evolution is well described by the effective time-independent Hamiltonian $H_*$ at stroboscopic times $t = T\mathbb{Z}$.
Without loss of generality, we assume the local observable $O$ is supported on a single site and $\norm O= 1$.
Following a similar technique used in Abanin~\etal~\cite{Abanin17PRB}, we write the difference between the approximate evolution under the effective Hamiltonian and the exact evolution (in the rotated frame):
\begin{align}
	\delta &= Q(t) U^\dag(t) O U(t) Q^\dag(t) - e^{it H_{*}} O e^{-it H_{*}}\nonumber\\
	&=i \int_0^t ds W^\dag(s,t)\comm{V'(s),e^{isH_{*}}Oe^{-isH_{*}}}W(s,t),\nonumber
\end{align}
where $U(t) = \mathcal{T}\exp\left( {-i\int_{0}^{t}H(t')dt'} \right)$ is the time evolution generated by the full Hamiltonian $H(t)$ and $W(s,t)=\mathcal{T}\exp\left({-i\int_{s}^{t}H'(t')dt'} \right)$ is the evolution from time $s$ to $t$ generated by $H'(t)$.
We can then bound the norm of the difference using the triangle inequality:
\begin{align}
	\norm{\delta}\leq \int_0^t ds \norm{\comm{V'(s),e^{isH_{*}}Oe^{-isH_{*}}}}.\label{eq:deltanorm}
\end{align}
We can bound the right-hand side of \cref{eq:deltanorm} using Lieb-Robinson bounds for power-law interactions.

First, we provide an intuitive explanation why the norm of $\delta$ is small for small time.
Recall that the operator $O$ is initially localized on a single site.
At small time, it is still quasilocal and therefore significantly noncommutative with only a small number of terms of $V'$ lying inside the ``light cone'' generated by the evolution under $H_{*}$.
There are several Lieb-Robinson bounds for power-law interactions~\cite{Gong14,Foss-Feig15,Tran18,Else18} [see also \cref{eq:LR-ZX-noY-noX} and \cref{eq:LR-Minh-constX}], each provides a different estimate for the shape of the light cone, resulting in a different bound for the heating time. 

If the light cone is logarithmic (as bounded in Ref.~\cite{Gong14}), the commutator norm in \cref{eq:deltanorm} would grow exponentially quickly with time and eventually negate the exponentially-small factor $\exp(-\kappa' \nmax)$ from $\lnorm{V'}$.
Therefore, in such cases, the system could potentially heat up only after $t_*\propto \nmax = 1/T$.
On the other hand, if we use the Lieb-Robinson bounds that imply algebraic light cones (as in Refs.~\cite{Foss-Feig15,Tran18,Else18} for $\alpha > 2D$), the commutator norm only grows subexponentially with time, and we can expect to recover the exponentially-long heating time $t_*\propto e^{\kappa' \nmax}$ derived for finite-range interactions~\cite{Abanin17PRB,Abanin17CMP}.

\cref{sec:applylr} contains the mathematical details, but the results of this analysis are as follows. Using Gong~\etal~\cite{Gong14} [or its $k$-body generalization \cref{eq:LR-ZX-noY}], which holds for $\alpha>D$ and has a logarithmic light cone $t\gtrsim \log r$, yields:\begin{align}
        \norm{\delta}\leq C e^{-\kappa' \nmax} e^{2Dvt/\alpha}.
\end{align}
Thus, the difference $\delta$ is only small for time $t_*\propto \nmax\propto 1/T$. 
This behavior is expected because the region inside the light cone implied by Gong~\etal~'s bound expands exponentially quickly with time. 

If instead we use the bound in Else~\etal~\cite{Else18}, we find:
\begin{align}
        \norm{\delta}\leq C e^{-\kappa' \nmax} \xi\left(\frac{D}{1-\sigma}\right)  t^{\frac{D}{1-\sigma}+1},\label{eq:usingElse}
\end{align}
where $\xi(x) \equiv \frac{1}{x}2^x \Gamma(x) $ and $\Gamma$ is the Gamma function. 
Thus, the difference is small up to an exponentially long time $t_*\propto e^{\kappa' \nmax \frac{1-\sigma}{D +1 - \sigma}}$.
The result holds for $\alpha > D\left(1+\frac{1}{\sigma}\right)$, where $\sigma$ can be chosen arbitrarily close to 1.
This condition is effectively equivalent to $\alpha>2D$ [see \cref{sec:applylr} for a discussion of the limit $\sigma \to 1^-$].

We may also use the bound in Tran~\etal~\cite{Tran18} [see \cref{eq:Minh-r0=1} for its generalization to $k$-body interactions], which gives
\begin{align}
        \norm{\delta}\leq C e^{-\kappa' \nmax} t^{\frac{D(\alpha-D)}{\alpha-2D}+1}.
\end{align}
Thus, the difference is small up to an exponentially-long time $t_*\propto \exp\left({\kappa' \nmax\frac{\alpha-2D}{\alpha(D+1)-D(D+2)}}\right)$.
This analysis works only when $\alpha>3D$, but, within this regime, the exponent of the heating time using this bound is larger than obtained in \cref{eq:usingElse}.
This is due to the trade-off between the tail and the light cone between the bounds in Refs.~\cite{Else18,Tran18}. 
See \cref{sec:applylr} for more details.

Finally, we conjecture a tight bound for power-law interactions that holds for all $\alpha > D$, and we will provide the full derivation of $\delta$ for such a bound.
First, we consider the light cone of such a bound. 
Given the best known protocols for quantum information transfer~\cite{Zach17}, the best light cone we could hope for would be $t\gtrsim r^{\alpha-D}$ for $D+1>\alpha>D$ and linear for $\alpha>D+1$. 
In the following, we assume the light cone of the conjectured bound is $t\gtrsim r^{1/\beta}$ for some constant $\beta\geq 1$ for all $\alpha>D$.

Next, we consider the tail of the bound, i.e. how the conjectured bound decays with the distance at a fixed time. 
Since it is always possible to signal between two sites using their direct interaction, which is of strength $1/r^\alpha$, the tail of the bound cannot decay faster than $1/r^\alpha$. 
We shall assume that the bound decays with the distance exactly as $1/r^\alpha$.

For simplicity, we assume that the conjectured bound takes the form
\begin{align}
    \norm{\comm{A(t),B}}\leq C\norm A\norm B \left(\frac{t^{\beta}}{r}\right)^{\alpha},
\end{align}
which manifestly has a light cone $t\gtrsim r^{1/\beta}$ and decays as $1/r^{\alpha}$ with the distance. 
Let $r_*(t) = t^{\beta}$ be the light cone boundary and consider the sum inside and outside the light cone.

For convenience, denote $V'' = C^{-1} e^{\kappa' \nmax} V', \bar H'' = \gamma^{-1} H_*$ so that $V'',\bar H''\in \Hpl$.
We can rewrite the bound on $\norm{\delta}$ as
\begin{align}
\norm{\delta}\leq C e^{-\kappa' \nmax}\int_0^t ds \norm{\comm{V''(s),e^{is\gamma \bar H''}Oe^{-is\gamma\bar H''}}},\label{EQ_delta_int}
\end{align}
Now write $V''(s) = \sum_{r=0}^\infty V''_r(s)$, where $V''_r(s)\equiv \sum_{X:\dist(X,O)\in[r,r+1)} h_X$ denotes the terms of $V''(s)$ supported on subsets exactly a distance between $r$ and $r+1$ away from $O$.
Since 
$V''(s)$ is a power-law Hamiltonian, it follows that $\norm{V''_r(t)}\leq C r^{D-1}$.
Writing the sum this way, we can now separate terms inside and outside of the light cone. 

For the terms inside the light cone, we bound:
\begin{align}
        &\sum_{r\leq r_*(s)} \norm{\comm{V''_r(s),e^{is\gamma \bar H''}Oe^{-is\gamma\bar H''}}}
        \nonumber\\
        &\leq 2\sum_{r\leq r_*(s)}  \norm{V''_r(s)}\norm {O}
        \leq C r_*(s)^{D}
        \leq C s^{\beta D}.\label{eq:usingtight-in}
\end{align}

For the terms outside the light cone, we use the conjectured bound:
\begin{align}
        &\sum_{r>r_*(s)} \norm{\comm{V''_r(s),e^{is\gamma \bar H''}Oe^{-is\gamma\bar H''}}}\nonumber\\
        &\leq C \sum_{r> r_*(s)}  \norm{V''_r(s)}\norm {O} \frac{s^{\beta\alpha}}{r^{\alpha}}\\
        &\leq C  \sum_{r> r_*(s)} \frac{s^{\beta\alpha}}{r^{\alpha-D+1}}
        \leq C  \frac{s^{\beta\alpha}}{r_*(s)^{\alpha-D}} = C s^{\beta D}.\label{eq:usingtight-out}
\end{align} 

Combining \cref{eq:usingtight-in} and \cref{eq:usingtight-out}, we get
\begin{align}
    \norm{\delta}\leq C e^{-\kappa' \nmax} t^{\beta D+1},
\end{align} 
which implies an exponential heating time as a function of $\nmax$, i.e. $t_*\propto \exp(\kappa' \nmax/{(\beta D+1)})$. 
Recall that the best values we can hope for $\beta$ are $\beta = 1/(\alpha-D)$ when $D+1>\alpha>D$ and $\beta = 1$ when $\alpha>D+1$.
Note also that the exponential heating time would hold for all $\alpha>D$, matching the result given by the linear response theory.

\section{Conclusion \& outlook}\label{sec:Conclusion}
Our work generalizes the results of Refs.~\cite{Abanin15,Abanin17PRB,Abanin17CMP} for finite-range interactions to power-law interactions.
Using two independent approaches, we show that periodically driven, power-law systems with a large enough exponent $\alpha$ can only heat up after time that is exponentially long in the drive frequency. 
The results only hold if $\alpha$ is larger than some critical value $\alpha_c$. 
Physically, the existence of $\alpha_c$ coincides with our expectation that power-law interactions with a large enough exponent $\alpha$ are effectively short-range.

However, the two approaches imply different values for $\alpha_c$.
While both the Magnus expansion in Ref.~\cite{Kuwahara2016} and the Magnus-like expansion in this paper independently suggest $\alpha_c = 2D$, the linear response theory implies $\alpha_c = D$.
We conjecture that this gap is due to the lack of tighter Lieb-Robinson bounds for power-law interactions, especially for $\alpha$ between $D$ and $2D$.
Indeed, we demonstrated in \cref{sec:PRB} that a tight Lieb-Robinson bound for this range of $\alpha$ implies an exponentially-long heating time for all $\alpha>D$, matching the result from the linear response approach, as well as previous numerical evidence for systems with $\alpha<2D$~\cite{Machado2017}.
Therefore, proving a tight Lieb-Robinson bound has important implications for the heating time of power-law interacting systems.

\textit{Note added.}---During the preparation of this manuscript, we became aware of a related complementary work on long-range prethermal phases \cite{machado19}.
We also became aware of a tighter Lieb-Robinson bound for power-law interactions~\cite{Chen19}. 
However, the bound has a range of validity $\alpha>2$ in one dimension and, thus, does not close the aforementioned gap. \begin{acknowledgments}
	We thank A.~Deshpande, J.~R.~Garrison, Z.~Eldredge, and Z-X.~Gong for helpful discussions. 
	MCT, AE, AYG, and AVG acknowledge funding from the DoE ASCR Quantum Testbed Pathfinder program (Award No. DE-SC0019040), the NSF PFCQC program, DoE BES Materials and Chemical Sciences Research for Quantum Information Science program (Award No. DE-SC0019449),
	NSF PFC at JQI, ARO MURI, ARL CDQI, and AFOSR\@.
	MCT acknowledges the NSF Grant No.~PHY-1748958 and the Heising-Simons Foundation.
	AVG acknowledges funding from NSF under grant No. Phy-1748958.
	AE acknowledges funding from the DoD. 
	AYG is supported by the NSF Graduate Research Fellowship Program under Grant No.\ DGE 1322106.
	PT was supported by the NIST NRC Research Postdoctoral Associateship Award.
	DAA acknowledges support from the Swiss National Science Foundation.
\end{acknowledgments}
\bibliographystyle{apsrev4-1}
\bibliography{heating-magnus}

\appendix

\section{Generalization of Gong~\etal~\cite{Gong14} to many-body interactions}
\label{sec:Gong}
In this section, we prove \cref{eq:LR-ZX} and thereby generalize the bound in Gong~\etal~\cite{Gong14} from two-body to $k$-body interactions, where $k$ is an arbitrary finite integer. 
This bound is an ingredient in the generalization of the tighter Lieb-Robinson bound in Tran~\etal~\cite{Tran18} to $k$-body interactions.

\begin{proof}
We recall that the bound in Ref.~\cite{Gong14} is based on the Hastings \& Koma series~\cite{Hastings06}:
\begin{widetext}
\begin{equation}\label{eq:HK0}
\|[A(t),B]\|\le2\|A\|\|B\|\sum_{k=1}^{\infty}\frac{(2t)^{k}}{k!}\left[\sum_{Z_{1}:Z_{1}\cap X\neq\varnothing}\sum_{Z_{2}:Z_{1}\cap Z_{2}\neq\varnothing}\dots\sum_{\substack{Z_{k}:Z_{k-1}\cap Z_{k}\neq\varnothing,\\
Z_{k}\cap Y\neq\varnothing
}
}\prod_{i=1}^{k}\|h_{Z_{i}}\|\right],
\end{equation}
and we can bound the summation within the
square brackets as
\begin{align}
\sum_{Z_{1}:Z_{1}\cap X\neq\varnothing}\sum_{Z_{2}:Z_{1}\cap Z_{2}\neq\varnothing}\dots\sum_{\substack{Z_{k}:Z_{k-1}\cap Z_{k}\neq\varnothing,\\
Z_{k}\cap Y\neq\varnothing
}
}\prod_{i=1}^{k}\|h_{Z_{i}}\| 
& \leq\sum_{i\in X}\sum_{j\in Y}\sum_{z_{1}}\sum_{z_{2}}\dots\sum_{z_{k-1}}
\bigg(\sum_{Z_{1}\ni i,z_{1}}\norm{h_{Z_1}}\bigg)
\dots
\bigg(
\sum_{\substack{Z_{k}\ni z_{k-1},j}}
\norm{h_{Z_k}}
\bigg)
\nonumber\\
& \leq\sum_{i\in X}\sum_{j\in Y}\lambda^k\mathcal{J}^{k}(i,j),\label{eq:HK} 
\end{align}
\end{widetext}
where $\mathcal{J}^{k}(i,j)$ is given by the $k$-fold convolution
of the hopping terms $J_{ij} \equiv\frac{1}{r_{ij}^\alpha}$ (where $r_{ij} = \dist(i, j)$) for $i\neq j$ and $J_{ii} = 1$ for all $i$:
\begin{align*}
\mathcal{J}^{k}(i,j) & \equiv\sum_{z_{1}}\sum_{z_{2}}\dots\sum_{z_{k-1}}
J_{iz_1}J_{z_{1}z_{2}}\dots J_{z_{k-1}j}. \label{eq:Jconvo}
\end{align*}
Note that \cref{eq:HK} comes from \cref{DEF_power-law}: $\sum_{Z\ni i,j}\norm{h_Z}\leq 1/r_{ij}^\alpha = J_{ij}$ for $i\neq j$ and
\begin{align}
\sum_{Z\ni i}\norm{h_Z}\leq \sum_{j}\sum_{Z\ni i,j} \norm{h_Z} \leq \lambda ,
\end{align}
where $\lambda =1+\sum_{j\neq i} 1/r_{ij}^\alpha$ is a finite constant for all $\alpha>D$.
This equation is exactly Eq.~(3) in Ref.~\cite{Gong14}.

For simplicity, we consider $D=1$ in the following discussion. 
To put a bound on $\mathcal{J}^{k}(i,j)$, we use the same trick as in Ref.~\cite{Gong14}.
First, we consider the sum over $z_1$:
\begin{align}
    \sum_{z_1} J_{iz_1} J_{z_1z_2} \leq 2 \sum_{z_1:r_{iz_1}\leq r_{z_1j}}J_{iz_1} J_{z_1z_2},
\end{align}
where the right hand side sums only over $z_1$ being closer to $i$ than to $z_2$ and the factor $2$ accounts for exchanging the roles of $i$ and $z_2$.
We further separate the sum over $z_1$ in \cref{eq:Jconvo} into two, corresponding to whether $z_1$ is within a unit distance from $i$ or not:
\begin{align}
    \sum_{z_1} J_{iz_1} J_{z_1z_2} \leq 2\left( 
      \sum_{z_1:r_{iz_1}\leq 1} +\sum_{z_1:r_{iz_1}\geq 2} \right) J_{iz_1} J_{z_1z_2}   .\label{eq:thetwosums}
\end{align}
Since $r_{iz_1}\leq r_{z_1z_2}$, it follows that $r_{z_1z_2}\geq r_{iz_2}/2$.
Therefore, $J_{z_1z_2}\leq 2^\alpha J_{iz_2}$ and we further bound the second sum in \cref{eq:thetwosums} by
\begin{align}
    \sum_{z_1:r_{iz_1}\geq 2} J_{iz_1} J_{z_1z_2} 
    &\leq 2^\alpha J_{iz_2} \sum_{z_1:r_{iz_1}\geq 2} J_{iz_1}   \nonumber\\
    &\leq 2^\alpha J_{iz_2} 2^{1-\alpha} (\lambda-1)  \nonumber\\
    &\leq 2 (\lambda-1) \sum_{z_1:r_{iz_1}\leq 1} J_{iz_1} J_{z_1z_2},
\end{align}
where we bound $\sum_{z_1:r_{iz_1}\geq 2} J_{iz_1}\leq 2^{1-\alpha} (\lambda-1)$ and $J_{iz_2}\leq \sum_{z_1:r_{iz_1}\leq 1} J_{iz_1} J_{z_1z_2}$ similarly to Ref.~\cite{Gong14}.
Therefore, we have $\sum_{z_1} J_{iz_1} J_{z_1z_2}\leq 4\lambda\sum_{z_1:r_{iz_1}\leq 1} J_{iz_1} J_{z_1z_2}.$
Repeating this analysis for $z_2,\dots,z_k$ in \cref{eq:Jconvo}, we have an upper bound on $\mathcal{J}^{k}(i,j)$:
\begin{align}
    &\mathcal{J}^{k}(i,j) 
    \leq (4\lambda)^{k-1} \sum_{z_{1}:r_{iz_1}\leq 1}\sum_{z_{2}:r_{z_1z_2}\leq 1}\dots\nonumber\\
    &\quad\quad\quad\dots\sum_{z_{k-1}:r_{z_{k-2}z_{k-1}}\leq 1}
J_{iz_1}J_{z_{1}z_{2}}\dots  J_{z_{k-1}j}\label{eq:nnhop}\\
    &\leq (12\lambda)^{k-1} \times \begin{cases}
    1/{(r_{ij}-k+1)^\alpha}   & \text{ if $k<\mu r_{ij}$},\\
    1                               & \text{ if $k\geq \mu r_{ij}$},                      
    \end{cases}\\
    &\leq (12\lambda)^{k-1} \times \begin{cases}
    1/{[(1-\mu)r_{ij}]^\alpha}   & \text{ if $k<\mu r_{ij}$},\\
    1                               & \text{ if $k\geq \mu r_{ij}$},                      
    \end{cases}
\end{align}
where $\mu\in (0,1)$ is an arbitrary constant.

To get the second to last bound, we note that the maximum value that the summand in \cref{eq:nnhop} may achieve is $1/{(r_{ij}-k+1)^\alpha}$ when $k<\mu r_{ij}$ and 1 when $k\geq r_{ij}$, and the number of sites within a unit distance of any site is $3$.
Plugging this bound into \cref{eq:HK0} and \cref{eq:HK}, we have the Lieb-Robinson bound in Ref.~\cite{Gong14} generalized to many-body interactions:
\begin{align}
    &\norm{\comm{A(t),B}}\nonumber\\
    &\leq \norm A\norm B \sum_{i\in X}\sum_{j\in Y} \Bigg(\sum_{k=1}^{\ceil{\mu r_{ij}}-1}\frac{(24\lambda^2t)^k}{6\lambda k! [(1-\mu)r_{ij}]^\alpha} \nonumber\\
    &\quad\quad\quad\quad\quad\quad\quad\quad\quad\quad\quad\quad\quad+ \sum_{k=\ceil{\mu r_{ij}}}^\infty \frac{(24\lambda^2 t)^k}{6\lambda k!} \Bigg)\nonumber\\
    &\leq \norm A\norm B \sum_{i\in X}\sum_{j\in Y} Ce^{v t}\left(\frac{1}{ [(1-\mu)r_{ij}]^\alpha} + e^{-\mu r_{ij}} \right)\label{eq:LR-ZX-before-sum}\\
    &\leq \norm A\norm B \abs{X}\abs{Y} Ce^{v t}\left(\frac{1}{ [(1-\mu)r]^\alpha} + e^{-\mu r} \right),\label{eq:LR-ZX2}
\end{align}
where $C = 1/6\lambda$, $v = 24\lambda^2 $, and $r$ is, again, the distance between $X,Y$.
The proof for $D>1$ follows a very similar analysis. 
\end{proof}

A feature of \cref{eq:LR-ZX2} is that it depends on $\abs{X},\abs{Y}$, which can become problematic when $A,B$ are supported on a large number of sites. 
In such cases, we can sum over the sites of $X,Y$ in \cref{eq:LR-ZX-before-sum} to get more useful bounds.
Without any other assumptions, summing over the sites of $Y$ gives an extra factor of $r^{D}$:
{\medmuskip=-0mu
\thinmuskip=-0mu
\thickmuskip=-0mu
\nulldelimiterspace=0pt
\scriptspace=0pt
\begin{align}
    \norm{\comm{A(t),B}}\leq C \norm A \norm B \abs{X} 
    \left(
    \frac{1}{(1-\mu)^\alpha}
    \frac{e^{vt}}{r^{\alpha-D}}
    +e^{vt- \mu r}
    \right)
    ,\label{eq:LR-ZX-noY}
\end{align}}
where the constant $C$ absorbs all constants that may depend on $\mu$.
Note that the bound still depends on $\abs{X}$ but not on $\abs{Y}$.

We can go one step further and sum over the sites of $X$, but we need to assume that $X$ is convex (similarly to Ref.~\cite{Tran18}).
Then, we have a bound
{\medmuskip=-0mu
\thinmuskip=-0mu
\thickmuskip=-0mu
\nulldelimiterspace=0pt
\scriptspace=0pt
\begin{align}
    \norm{\comm{A(t),B}}\leq C \norm A \norm B \phi(X) 
    \left(
    \frac{1}{(1-\mu)^\alpha}
    \frac{e^{vt}}{r^{\alpha-D-1}}
    +e^{vt-\mu r}
    \right)
    ,\label{eq:LR-ZX-noY-noX}
\end{align}}
which is independent of $\abs{X}$. 
Here $\phi(X)$ is the boundary area of $X$, defined as the number of sites in $X$ that are adjacent to a site outside $X$.
\section{Absorption rate from linear response theory}\label{sec:app-linear-response}

This section provides more details on the derivation of the absorption rate within linear response theory.
In particular, we provide more mathematically rigorous proofs of \cref{EQN:LR_Exp_bound}~[\cref{sec:sigmaij_bound_proof}] and \cref{eq:linear-response-bound}~[\cref{sec:linear-response-bound-proof}].

\subsection{Proof of \cref{EQN:LR_Exp_bound}}
\label{sec:sigmaij_bound_proof}
In this section, we prove the statement of \cref{EQN:LR_Exp_bound} [also \cref{eq:sigmaij_bound} below].
We recall that the system Hamiltonian $H_0$ is a power-law Hamiltonian, while the harmonic drive $V(t)=g\cos(\omega t) O$ is a sum of local terms, $g\cos(\omega t)O_i$, each of which is supported on the site $i$ only, where $i$ runs over the sites of the system.
In addition, we assume that the system is initially in the equilibrium state $\rho_\beta$ of $H_0$ corresponding to the temperature $1/\beta$.
To the lowest order in $g$, the energy absorption rate of the system is proportional to the dissipative (imaginary) part of the response function, $\sigma(\omega) = \sum_{i,j} \sigma_{ij}(\omega)$, where $i,j$ are the sites of the system and 
\begin{align}
	\sigma_{ij}(\omega) = \frac{1}{2} \int_{-\infty}^{\infty} dt e^{i\omega t} \avg{\comm{O_i(t),O_j(0)}}_\beta,
\end{align}
where $\avg{X}_\beta \equiv \Tr(\rho_\beta X)$ denotes the expectation value of an operator $X$ in $\rho_\beta$.

In Ref.~\cite{Abanin15}, the authors proved that there exists constants $C,\kappa$ such that for all $\omega>0,\delta\omega>0$ and for all pairs $i,j$,  
\begin{align} 
	\abs{\sigma_{ij}([\omega,\omega+\delta\omega])}
	\leq C e^{-\kappa \omega}.\label{eq:sigmaij_bound}
\end{align}
The statement in Ref.~\cite{Abanin15} is for finite-range interactions, but, for completeness, we show here that it also holds for power-law Hamiltonians.
First, we consider the diagonal terms $\sigma_{ii}(\omega)$. 
Let $\ket{n}$ and $E_n$ denote the eigenstates and eigenvalues of $H_0$. 
Similarly to Ref.~\cite{Abanin15}, we rewrite $\sigma_{ii}(\omega)$ as
\begin{align}
	\sigma_{ii}(\omega) = \pi \sum_{n} p_n [\gamma_{ii}^{(n)}(\omega) - \gamma_{ii}^{(n)}(-\omega)],\label{eq:sigmaii} 
\end{align}
where $p_n$ is the probability that the state is in the eigenstate $\ket{n}$, and $\gamma_{ii}^{(n)}$ denotes the contribution to $\sigma_{ii}$ from the $n$-th eigenstate:
\begin{align}
	\gamma_{ii}^{(n)}(\omega) &= \sum_m \abs{\bra m O_i \ket{n}}^2 \delta(E_n-E_m-\omega) \nonumber\\
	&= \sum_{m}\frac{\abs{\bra{m}\ad_H^{k} O_i\ket{n}}^2}{\omega^{2k}}\delta(E_n-E_m-\omega),\label{eq:gammaii}
\end{align}
where $\ad_H O_i = \comm{H,O_i}$, $k$ is an integer to be chosen later, and the last equality comes from the fact that $\ket{m},\ket{n}$ are eigenstates of $H$ and the $\delta$ function fixes the energy difference to be $\omega$.

In Ref.~\cite{Abanin15}, the authors used the fact that $H$ has a finite range to upper bound the norm of $\ad_H^k O_i$ by $\lambda^k k!$ for some constant $\lambda$.
For power-law interactions, the proof does not apply because the Hamiltonian $H$ can contain interaction terms between arbitrarily far sites. 
Instead, we upper bound $\ad_H^k O_i$ by realizing that $O_i$ technically satisfies \cref{DEF_power-law} and is therefore a power-law Hamiltonian. 
It then follows from \cref{LEM_adHpl} in \cref{sec:math} that $\ad_H^k O_i \in \lambda^k k! \Hpl$, i.e. $\ad_H^k O_i$ is a power-law Hamiltonian up to a factor $\lambda^k k!$, where $\lambda$ is the same constant as in \cref{LEM_adHpl} and $\Hpl$ is the set of power-law Hamiltonians with exponent $\alpha$ [See \cref{sec:comm}].
Finally, we can upper bound
\begin{align}
	\norm{\ad_H^k O_i}\leq C \lambda^k k!, \label{eq:norm_ad_O}
\end{align}
by realizing that the supports of the terms in $\ad_H^k O_i$ all contain the site $i$.

Integrating \cref{eq:gammaii} over $\omega$, assuming $\delta\omega$ is small enough so that the number of energy levels in the range $[\omega,\omega+\delta\omega]$ is finite, and using \cref{eq:norm_ad_O}, we have
\begin{align}
	\abs{\gamma_{ii}^{(n)}([\omega,\omega+\delta\omega])} 
	&\leq C\left(\frac{\lambda^k k!}{\omega^k}\right)^2 \nonumber\\
	&\leq C\left(\frac{\lambda k}{\omega}\right)^{2k} 
	\leq C e^{-\kappa \omega},
\end{align}
where $\kappa = 2/(\lambda e)$ and, to get the last line, we choose $k = \omega/(\lambda e)$.
Plugging this bound into \cref{eq:sigmaii} and summing over $n$ yields \cref{EQN:LR_Exp_bound} for $i = j$.
The bound for $i\neq j$ can be derived using the positivity of $\sigma$~\cite{Abanin15} and the Cauchy-Schwartz inequality,
\begin{align}
	\abs{\sigma_{ij}(\omega)} \leq \frac{1}{2} [\sigma_{ii}(\omega)+\sigma_{jj}(\omega)].
\end{align}
Therefore, \cref{EQN:LR_Exp_bound} applies for all power-law Hamiltonians $H$.

\subsection{Proof of \cref{eq:linear-response-bound}} \label{sec:linear-response-bound-proof}
We now provide a rigorous proof of \cref{eq:linear-response-bound} in the main text. 
\Cref{eq:sigmaij_bound} says that the $(i,j)$ entry of $\sigma([\omega,\omega+\delta\omega])$ is exponentially suppressed.
In principle, summing over all $i,j$ implies that $\sigma([\omega,\omega+\delta\omega])$ is also exponentially small as a function of $\omega$.
However, since there are $N$ sites in the system, this summation results in an additional factor of $N^2$, making $\sigma([\omega,\omega+\delta\omega])$ superextensive.
Therefore, this naive calculation breaks down in the thermodynamic limit ($N\rightarrow \infty$).

Instead, to show that $\sigma([\omega,\omega+\delta\omega])$ increases only as fast as $ N$, we use  Lieb-Robinson bounds to bound the off-diagonal terms $\sigma_{ij}(\omega)$.
Let $r_{ij} = \dist(i,j)$ denote the distance between the pair of sites $i,j$.
Without loss of generality, we assume $\omega\geq 2\delta\omega$.
We can then bound
\begin{align}
	&\sigma([\omega,\omega+\delta\omega])
	=\int_\omega^{\omega+\delta\omega} d\omega' \sigma(\omega')\nonumber\\
	&\leq c_1 \int_{-\infty}^{\infty} d\omega' e^{-(\frac{\omega'-\omega}{\delta\omega})^2}\sigma(\omega')\nonumber\\
	&=c_2\delta\omega \sum_{i,j}\int_{-\infty}^{\infty} dt e^{(-t/\delta t)^2} e^{-i\omega t} \avg{\comm{O_i(t),O_j}},\label{eq:to-time-domain}
\end{align}
where $c_1 = \frac{e}{1-e^{-8}}$, $c_2 = c_1\sqrt{\pi}/2$, which we will combine and denote by $C$, and $\delta t=2/\delta\omega$. 
The first inequality is because $\sigma(\omega)$ is positive for $\omega>0$ and $\sigma(-\omega) = -\sigma(\omega)$.
The second equality comes from evaluating the integral over $\omega'$.
We then use the Lieb-Robinson bound in Ref.~\cite{Gong14}, which applies for interactions with characteristic exponent $\alpha > D$:
\begin{equation}\label{EQN:Ctr_Gong}
	\norm{[O_{i}(t), O_{j}(0)]}\leq C e^{vt}\left(\frac{1}{r_{ij}^{\alpha}}+e^{-\mu r_{ij}}\right),
\end{equation}
where $v,C,\mu$ are positive constants.
While this bound was derived in Ref.~\cite{Gong14} for 2-body interactions, it also holds for more general $k$-body interactions and thus for fully general power-law Hamiltonians [see \cref{eq:LR-ZX2}]. 

We now divide the sum in \cref{eq:to-time-domain} into two parts corresponding to $r_{ij}>r_*$ and $r_{ij}\leq r_*$ for some parameter $r_*$ we shall choose later.
The sum over $i,j$ such that  $r_{ij}>r_{*}$ can then be bounded by first inserting \cref{EQN:Ctr_Gong} into \cref{eq:to-time-domain} and evaluating the integration over time. 
Note that the factor $ e^{-t^2/\delta t^2}$ suppresses the contribution from $e^{vt}$ at large $t$. 
Therefore, performing the integral yields an upper bound $C(1/r_{ij}^{\alpha}+e^{-\mu r_{ij}})$ for each term corresponding to the pair $(i,j)$, and the sum over $r_{ij}>r_*$ gives:
\begin{align}
\sum_{i,j:r_{ij} >r_{*}}C\left(\frac{1}{r_{ij}^{\alpha}}+e^{-\mu r_{ij}}\right) 
\leq  CN \left( \frac{1}{r_{*}^{\alpha-D}}+e^{-\mu r_{*}}\right),\label{eq:sum>r_*}
\end{align} 
for $\alpha >D$, where the factor of $N$ comes from summing over $i$ and the factor of $r^{D}$ comes from summing over $j$.

On the other hand, for $r_{ij}\leq r_*$, we simply use \cref{eq:sigmaij_bound} to bound their contributions. 
Summing over $i,j$ such that $r_{ij}\leq r_*$, we get a bound $CN r_*^D e^{-\kappa \omega}$, where the factor of $N$ again comes from summing over $i$ and the factor of $r_*^D$ from counting the number of sites $j$ within a distance $r_*$ from $i$. 
Combining with \cref{eq:sum>r_*} yields an upper bound on the the total heating rate
\begin{align}
	|\sigma([\omega, \omega + \delta\omega])| \leq CN r_*^D \left(e^{-\kappa \omega}+\frac{1}{r_*^\alpha}+r_*^{-D}e^{-\mu r_*}\right).
\end{align}
Choosing $r_{*}^\alpha = e^{\kappa\omega}$ and noting that the last term is dominated by the first two when $\omega$ is large enough, we find
\begin{align}
	|\sigma([\omega, \omega + \delta\omega])| \leq CN e^{-\frac{\alpha-D}{\alpha}\kappa\omega},
\end{align}
which is exponentially small with $\omega$ as long as $\alpha>D$.  
\section{The effective Hamiltonian}\label{sec:Gq}
In this section, we study the structure of the effective Hamiltonian defined in \cref{eq:Hprime}.
Specifically, we show that the operators $G_q$ defined in \cref{eq:Gq} are also power-law Hamiltonians [See also \cref{LEM_Gq} in the main text for $q<\qmax$ and \cref{LEM_Gq>} below for $q\geq\qmax$].
In addition, we show that the norm $G_q$ for $q\geq \qmax$ is exponentially small as a function of $q$ and $\nmax$ [\cref{LEM_Gq>}], implying that the norm of the residual drive $V'$ is also exponentially small. 

\subsection{Structure of $G_q$ for $q<\qmax$}
\label{app:LemGq>proof}
First, we prove the statement of \cref{LEM_Gq} that the operators $G_q$ are also power-law Hamiltonians for all $q<\qmax$. 
\begin{proof}
We proceed by induction and assume that \cref{LEM_Gq} holds for all $q$ up to $q=q_0-1$ for some $q_0\geq 1$.
We now prove that it also holds for $q=q_0$.
We consider the first term in the definition of $G_{q_0}$ [\cref{eq:Gq}]:
\begin{align}
    G_{{q_0},1} = \sum_{k=1}^{q_0} \frac{(-1)^k}{k!} \sum_{\substack{
            1\leq i_1,\dots,i_k\leq q_0\\
            i_1+\dots+i_k=q_0}} 
    \ad_{\Omega_{i_1}}\dots\ad_{\Omega_{i_k}} H(t).
\end{align}
\begin{widetext}
Using \cref{LEM_Gq} (note that it applies to all $i\leq q_0$) and \cref{LEM_adHpl} in \cref{sec:math}, we have
\begin{align}
    G_{{q_0},1} 
    &\in \sum_{k=1}^{q_0}\frac{1}{k!}
    \sum_{\substack{
            1\leq i_1,\dots,i_k\leq q_0\\
            i_1+\dots+i_k=q_0}} 
    T^{q_0} c^{q_0} \lambda^{q_0-k}
    \prod_{j=1}^k (i_j-1)! q_0^kc^{-k}\lambda^k\Hpl^{(q_0+1)}\nonumber\\
    &=T^{q_0} c^{q_0}\lambda^{q_0}\sum_{k=1}^{q_0}\frac{q_0^kc^{-k}}{k!}
    \sum_{\substack{
            1\leq i_1,\dots,i_k\leq q_0\\
            i_1+\dots+i_k=q_0}} \prod_{j=1}^k (i_j-1)! \Hpl^{(q_0+1)}\nonumber\\
    &\subseteq T^{q_0} c^{q_0}\lambda^{q_0}\sum_{k=1}^{q_0}\frac{q_0^kc^{-k}}{k!}(q_0-k)! 2^k \Hpl^{(q_0+1)}
    \nonumber\\    
    &
    \subseteq T^{q_0} c^{q_0}\lambda^{q_0}q_0!\underbrace{\sum_{k=1}^{q_0}\frac{2^kq_0^kc^{-k}(q_0-k)! }{q_0!k!}}_{\leq c_1} \Hpl^{(q_0+1)} \nonumber\\
    &\subseteq c_1T^{q_0} c^{q_0}\lambda^{q_0}q_0! \Hpl^{(q_0+1)},\label{EQ_Gq1<}
\end{align}
where $c_1$ is a constant which exists because the sum over $k$ converges [See \cref{lem:c1} in \cref{sec:math}]. 
\end{widetext}
To get the first equation, we use \cref{LEM_adHpl}, with $k_{\max}$ upper bounded by $q_0$ every time. 
We have also used the second part of \cref{LEM_Sum_inv_mult} in the Appendix to bound the sum over $i_1,\dots,i_k$.

Next, we consider the second term in the definition of $G_{q_0}$:
{
\medmuskip=-0mu
\thinmuskip=-0mu
\thickmuskip=-0mu
\nulldelimiterspace=0pt
\scriptspace=0pt
\begin{align}
     G_{q_0,2} 
    &=  i\sum_{k=1}^{q} \frac{(-1)^{k+1}}{(k+1)!}
    \sum_{\substack{
            1\leq i_1,\dots,i_k,m\leq q+1\\
            i_1+\dots+i_k+m=q+1}}
    \ad_{\Omega_{i_1}}\dots\ad_{\Omega_{i_k}} \partial_t \Omega_m(t).
\end{align}
}
\begin{widetext}
Again, we use \cref{LEM_Gq} and \cref{LEM_adHpl} to show that
\begin{align}
    G_{q_0,2} 
    &\in 
    \sum_{k=1}^{{q_0}} \frac{q_0^k}{(k+1)!}
    \sum_{\substack{
            1\leq i_1,\dots,i_k,m\leq {q_0}+1\\
            i_1+\dots+i_k+m={q_0}+1}}
        T^{{q_0}}c^{q_0-k-1}\lambda^{{q_0}-k-1}
        \prod_{j=1}^k(i_j-1)! (m-1)!   
        \lambda^k \Hpl^{(q_0+1)}\\
        &= T^{q_0} c^{q_0 }\lambda^{q_0}  
    \sum_{k=1}^{{q_0}} \frac{q_0^kc^{-k}}{(k+1)!}
    \underbrace{\sum_{\substack{
                1\leq i_1,\dots,i_k,m\leq {q_0}+1\\
                i_1+\dots+i_k+m={q_0}+1}}}_{\leq 2^{k+1}}
    \underbrace{\prod_{j=1}^k(i_j-1)! (m-1)!}_{\leq (q_0+1-(k+1))! = (q_0-k)!} \Hpl^{(q_0+1)}\\
    &\subseteq  T^{q_0} c^{q_0 }\lambda^{q_0}  
        2\sum_{k=1}^{{q_0}} \frac{2^kq_0^kc^{-k}}{(k+1)\!}(q_0-k)! 
        \Hpl^{(q_0+1)}\\
    &\subseteq  2T^{q_0} c^{q_0 }\lambda^{q_0}q_0!  
        \underbrace{\sum_{k=1}^{q_0} \frac{2^kq_0^kc^{-k}}{k!}\frac{(q_0-k)!}{q_0!}}_{\leq c_1} \Hpl^{(q_0+1)}
    \subseteq 2c_1  T^{q_0} c^{q_0 }\lambda^{q_0}  q_0!  \Hpl^{(q_0+1)},\label{EQ_Gq2<}
\end{align}
where we have used \cref{LEM_Sum_inv_mult} in \cref{sec:math} to bound the sums over $i_1,\dots,i_k, m$.
\end{widetext} 
Combining \cref{EQ_Gq1<} and \cref{EQ_Gq2<}, we have
\begin{align}
    G_{q_0}\in 3c_1 T^{q_0} c^{q_0}q_0! \lambda^{q_0} \Hpl^{(q_0+1)}.
\end{align}
Note that $c_1$ can be made arbitrarily small by choosing a larger value for $c$.
Therefore, with $c$ large enough so that $3c_1<1$, we have that \cref{LEM_Gq} holds for $q = q_0$.
\end{proof}\subsection{Structure of $G_q$ for $q\geq \qmax$}
\label{sec:proofGq>}

We now prove \cref{eq:Gq>expsmall}, which is a similar result to \cref{LEM_Gq}, but for $q\geq \qmax = \nmax$.
\begin{lemma}\label{LEM_Gq>}
For all $q\geq \qmax = \nmax$, 
$
    G_q \in C e^{-\kappa' q} \Hpl,
$
where $C$ and $\kappa'$ are constants.
\end{lemma}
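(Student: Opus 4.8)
\textbf{Proof proposal for \cref{LEM_Gq>}.}

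The plan is to bound $G_q$ for $q\geq\qmax$ by the same inductive strategy used for \cref{LEM_Gq}, but now paying careful attention to the fact that $\Omega_m = 0$ for all $m\geq\qmax+1$, so that only a restricted range of indices contributes to the nested commutators in \cref{eq:Gq}. First I would record the two regimes we already control: for $q<\qmax$ the bound \cref{eq:OmegaBound} together with the choice \cref{qmax} gives $\Omega_{q+1}\in \frac{e}{c\lambda}e^{-\kappa q}\Hpl^{(q+1)}$ (equivalently $\partial_t\Omega_{q+1}$ with one fewer power of $T$), and in particular each $\Omega_m$ with $1\le m\le\qmax$ lies in $C e^{-\kappa m}\Hpl$ up to the stated prefactors; for $m\geq\qmax+1$ we have $\Omega_m=0$ and $\partial_t\Omega_m = 0$, so the second sum in \cref{eq:Gq} is empty for such $m$. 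The claim is that plugging these into \cref{eq:Gq} and using the composition rule for nested adjoint actions of power-law Hamiltonians (\cref{LEM_adHpl}) produces, after summing the multi-index constraints, a prefactor that decays like $e^{-\kappa' q}$.

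The key steps, in order: (i) In \cref{eq:Gq}, note the constraint $i_1+\dots+i_k=q$ (first sum) or $i_1+\dots+i_k+m=q+1$ (second sum) with every index $\le \qmax$; since each $\Omega_{i_\nu}$ contributes a factor roughly $e^{-\kappa i_\nu}$, the product over $\nu$ contributes $e^{-\kappa(i_1+\dots+i_k)}=e^{-\kappa q}$ (respectively $e^{-\kappa(q+1-m)}$ from the $\Omega$'s times the factor from $\partial_t\Omega_m$ or from $H(t)$). (ii) Apply \cref{LEM_adHpl} to turn each nested commutator $\ad_{\Omega_{i_1}}\dots\ad_{\Omega_{i_k}}(\cdot)$ into an element of $\Hpl$ with a combinatorial prefactor of the form $\prod_\nu (i_\nu - 1)!$ times powers of $q$ and $c$, exactly as in the $q<\qmax$ proof; the local support size grows by at most $1$ per commutator but since all the $\Omega$'s are capped at order $\qmax$, the support size is still $O(\qmax)$ and we can absorb it into $\Hpl$. (iii) Sum over $k$ and over the multi-indices using \cref{LEM_Sum_inv_mult} and \cref{lem:c1} to bound $\sum_k \frac{2^k q^k c^{-k}(q-k)!}{q!\,k!}$ by a convergent constant $c_1$ that can be made small by taking $c$ large. (iv) Collect the prefactors: the combinatorial/factorial part is controlled exactly as before, and the extra gain is the surviving $e^{-\kappa q}$ (or $e^{-\kappa(q+1)}$ shifted by the at-most-one factor of $e^{+\kappa}$ coming from $\partial_t\Omega_m$ having order $m-1$ rather than $m$ and from $G_0 = H\in\Hpl^{(1)}$ contributing $e^{0}$ when $k$ adjoints act on $H$ with $i_1+\dots+i_k=q$), which is why one only gets $\kappa' > \kappa - \ln 2$ rather than $\kappa' = \kappa$. (v) Conclude $G_q \in C e^{-\kappa' q}\Hpl$ with $\kappa' > \kappa - \ln 2 > 0$ by the assumption $\kappa > \ln 2$.

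The main obstacle I anticipate is step (iv): bookkeeping the exponential prefactor cleanly. In the first term of \cref{eq:Gq}, the innermost operator is $H(t)\in\Hpl^{(1)}$, which carries \emph{no} smallness, so the $e^{-\kappa q}$ must come entirely from the $k$ factors of $\Omega_{i_\nu}$ whose orders sum to $q$ — this is fine. But in the second term, the innermost operator is $\partial_t\Omega_m$, whose order is $m-1$ (it carries $e^{-\kappa(m-1)} = e^{\kappa}e^{-\kappa m}$, i.e. loses one power of $e^{-\kappa}$ relative to the naive count), and the outer $\Omega_{i_\nu}$'s have orders summing to $q+1-m$; the total exponential is then $e^{\kappa}e^{-\kappa(q+1)} = e^{-\kappa q}$, consistent, but one must also verify the factor $2^{k}$ (or $2^{k+1}$) from the multi-index count, combined with $1/k!$, does not overwhelm the $e^{-\kappa k}$-type gains — this is precisely where $\kappa > \ln 2$ is needed and where the $\ln 2$ loss in $\kappa'$ enters. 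A secondary subtlety is that for $q\geq\qmax$ some of the $i_\nu$ in the sum could be $\geq\qmax$, but then the corresponding $\Omega_{i_\nu}$ vanishes unless $i_\nu=\qmax$ exactly (recall $\Omega_m=0$ for $m>\qmax$, while $\Omega_{\qmax}$ is allowed via \cref{Omegaq} at $q=\qmax-1$); restricting to $i_\nu\le\qmax$ only shrinks the sums, so the bound is unaffected, but this restriction should be stated explicitly.
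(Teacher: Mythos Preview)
Your overall plan matches the paper: extract $e^{-\kappa q}$ from the product of the $\Omega_{i_\nu}$'s and then control the remaining combinatorial sum. But steps (ii)–(iii) mix two incompatible bookkeeping schemes. You cannot keep both the exponential form $\Omega_{i_\nu}\in Ce^{-\kappa i_\nu}\Hpl$ from step (i) \emph{and} the factorial prefactor $\prod_\nu(i_\nu-1)!$ you cite in step (ii): the factorials are precisely what get absorbed into $e^{-\kappa i_\nu}$ via the bound $T^m(m-1)!c^{m-1}\lambda^{m-1}\leq\frac{1}{mc\lambda}e^{-\kappa m}$ (valid only for $m\leq\nmax$). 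If instead you keep only the factorial form and rerun the $q<\qmax$ argument with \cref{lem:c1}, you obtain $G_q\in c_1 T^qc^q\lambda^q q!\,\Hpl$; but for $q\geq\qmax$ one has $T^qc^q\lambda^q q!\sim e\sqrt{q}\,(qe^{-\kappa}/\nmax)^q$, which \emph{grows} once $q>e^{\kappa}\nmax$ and therefore cannot yield $Ce^{-\kappa'q}$ uniformly in $q$. This is exactly why a different estimate is needed beyond $\qmax$, and why \cref{lem:c1} is the wrong tool here.

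The paper's route is to use only the exponential form $\Omega_m\in\frac{1}{mc\lambda}e^{-\kappa m}\Hpl$, so that after $k$ applications of \cref{LEM_adHpl} (each contributing a factor $\lambda k_{\max}\leq\lambda q$) the nested commutator lies in $\frac{q^k}{c^k}\frac{1}{i_1\cdots i_k}e^{-\kappa q}\Hpl$. The leftover sum $\sum_k\frac{q^k}{c^kk!}\sum_{i_1+\cdots+i_k=q}\frac{1}{i_1\cdots i_k}$ is then bounded crudely by $e^{q/c}\cdot 2^q$ (drop $\frac{1}{i_1\cdots i_k}\leq1$ and use that the total number of compositions of $q$ is at most $2^q$), giving $\kappa'=\kappa-\ln2-1/c$. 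Replace your \cref{lem:c1} invocation with this cruder count and the rest of your outline goes through; the $\ln2$ loss you anticipated in step (iv) is exactly this $2^q$.
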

\begin{proof}
    Let us first look at the first term in \cref{eq:Gq}:
\begin{align}
G_{q,1} = \sum_{k=1}^q \frac{(-1)^k}{k!} \sum_{\substack{
        1\leq i_1,\dots,i_k\leq \nmax\\
        i_1+\dots+i_k=q}} 
\ad_{\Omega_{i_1}}\dots\ad_{\Omega_{i_k}} H(t).
\end{align} 
We also recall from \cref{LEM_Gq} that for all $q\leq \nmax$, 
\begin{align}
\Omega_{q} \in T^q (q-1)! c^{q-1} \lambda^{q-1} \Hpl^{(q)} 
\subseteq \frac{1}{\lambda cq}T^q q! c^q \lambda^{q} \Hpl.
\end{align}
For all $q\leq \nmax$, we have
\begin{align}
    T^q q! c^q \lambda^q \leq (Tc\lambda q)^q \leq (Tc\lambda \nmax)^q\leq e^{-\kappa q},
\end{align}
where we have used $\nmax = e^{-\kappa} /{(Tc\lambda)}$.
Therefore, for all $q\leq \nmax$, we have
\begin{align}
\Omega_{q} \in \frac{1}{\lambda c q} T^q q! c^q \lambda^{q} \Hpl \in \frac{1}{\lambda c q}e^{-\kappa q}\Hpl.
\end{align}
Note also that $H(t)\in \Hpl$. 
Therefore, using \cref{LEM_adHpl}, we have 
\begin{align}
    \ad_{\Omega_{i_1}}\dots\ad_{\Omega_{i_k}} H(t) 
    &\in\frac{1}{i_1\dots i_k} \frac{q^k}{c^k} e^{-\kappa q}\lambda^{-k}\lambda^{k} \Hpl \nonumber\\
    &= \frac{1}{i_1\dots i_k} \frac{q^k}{c^k}e^{-\kappa q} \Hpl.
\end{align}
Thus, we get for all $q$:
\begin{align}
    G_{q,1} &\in \underbrace{\left(\sum_{k=1}^q \frac{q^k}{c^k k!} \sum_{\substack{
            1\leq i_1,\dots,i_k\leq \nmax\\
            i_1+\dots+i_k=q}} \frac{1}{i_1\dots i_k}\right)}_{\leq e^{q/c} 2^q} e^{-\kappa q}\Hpl\nonumber\\ 
            &\subseteq  e^{-(\kappa-\ln 2-1/c)q} \Hpl.\label{EQ_Gq1}
\end{align}
Note that the $i_{j} \leq \nmax$ as we only define $\Omega$ up to $\nmax$. Further, the factor of $2^{q}$ comes from upper-bounding $\frac{1}{i_{1}\dots i_{k}}$ with 1 and the number of terms with $2^{q}$. Next, we consider the second term in the definition of $G_q$:
{\medmuskip=-0mu
\thinmuskip=-0mu
\thickmuskip=-0mu
\nulldelimiterspace=0pt
\scriptspace=0pt
\begin{align}
    G_{q,2} = i \sum_{k=1}^{q} \frac{(-1)^{k+1}}{(k+1)!}\sum_{\substack{
            1\leq i_1,\dots,i_k,m\leq q+1\\
            i_1+\dots+i_k+m=q+1}}
    \ad_{\Omega_{i_1}}\dots\ad_{\Omega_{i_k}} \partial_t \Omega_m(t). 
\end{align}}
Note that
\begin{align}
    \partial_t \Omega_m(t) \in T^{m-1} (m-1)! c^{m-1} \lambda^{m-1} \Hpl 
    \subseteq e^{-\kappa(m-1)} \Hpl.
\end{align}
Thus, we have
\begin{align}
    G_{q,2}&\in     \left(\sum_{k=1}^{q} \frac{q^{k}}{c^k(k+1)!}\sum_{\substack{
            1\leq i_1,\dots,i_k,m\leq q+1\\
            i_1+\dots+i_k+m=q+1}}  e^{-\kappa q}\right)\Hpl\nonumber\\
        &\subseteq      2e^{-(\kappa-\ln 2-1/c)q} \Hpl.\label{EQ_Gq2}
\end{align}
Combining \cref{EQ_Gq1} and \cref{EQ_Gq2}, we arrive at \cref{LEM_Gq>} with $\kappa' = \kappa-\ln2 - 1/c$, which can be made to be positive by choosing $\kappa>\ln 2 + 1/c$. It suffices, however, to choose $\kappa > \ln 2$, since making $c$ large enough sends $1/c$ to zero. 
\Cref{eq:V'} also follows.
\end{proof}\section{Using Lieb-Robinson bounds for evolutions of local observables}
\label{sec:applylr}
In this section, we use the Lieb-Robinson bounds to bound the norm of $\delta$ in \cref{eq:deltanorm}.
In the main text, we argue that $\norm{\delta(t)}$ would be small up to time $t_*\propto \omega_*$ if the light cone induced by the Lieb-Robinson bound is logarithmic, and $t_*\propto e^{\kappa'\omega_*}$ if the light cone is algebraic.
We provide below the mathematical details to supplement the argument.

Recall that $V'(t)\in C e^{-\kappa' \nmax} \Hpl, H_* \in \gamma \Hpl$ ($\gamma$ is a constant that depends only on $\kappa,\alpha$) and that we defined the normalized $V'' = C^{-1} e^{\kappa' \nmax} V, \bar H'' = \gamma^{-1} H_*$ such that:
\begin{align}
\norm{\delta}\leq C e^{-\kappa' \nmax}\int_0^t ds \norm{\comm{V''(s),e^{is\gamma \bar H''}Oe^{-is\gamma\bar H''}}}.
\end{align}
We now use a Lieb-Robinson bound for power-law interactions to bound the commutator. 
The idea is that for a finite time $s$, the operator $O$ mostly spreads within a light cone, and only the terms of $V''(s)$ within the light cone significantly contribute to the commutator. 

In contrast to the finite-range interacting Hamiltonians,
a tight Lieb-Robinson bound has yet to be proven for power-law Hamiltonians with finite $\alpha>D$.
In the following sections, we consider the effect of using different Lieb-Robinson bounds, namely the bounds in Gong~\etal~\cite{Gong14}, Else~\etal~\cite{Else18}, Tran~\etal~\cite{Tran18}. The case of a hypothetical bound, which would be tight if it were proven, is treated in the main text.

\subsection{Using Gong~\etal~\cite{Gong14}'s bound}
First, we consider a generalization of the bound in Gong~\etal~\cite{Gong14} [See also \cref{eq:LR-ZX-noY}].
The bound holds for $\alpha>D$, has a logarithmic light cone $t\gtrsim \log r$, and is extended to many-body interactions. 
To bound the commutator norm in \cref{EQ_delta_int}, recall that we write $V''(s) = \sum_{r=0}^\infty V''_r(s)$, where $V''_r(s)\equiv \sum_{X:\dist(X,O)\in[r,r+1)} h_X$ denotes the terms of $V''(s)$ supported on subsets exactly a distance between $r$ and $r+1$ away from $O$.
furthermore, since $V''(s)$ is a power-law Hamiltonian, it follows that $\norm{V''_r(t)}\leq C r^{D-1}$.

From \cref{eq:LR-ZX-noY}, the light cone of the bound is $r_*(s)= e^{vs/\alpha}$.
We further divide $V''_r(s)$ into those with $r\leq r_*(s)$ and $r>r_*(s)$.
In the former case, we simply bound:
\begin{align}
        &\sum_{r\leq r_*(s)} \norm{\comm{V''_r(s),e^{is\gamma \bar H''}Oe^{-is\gamma\bar H''}}}\nonumber\\
        &\leq 2\sum_{r\leq r_*(s)}  \norm{V''_r(s)}\norm {O} \leq C r_*(s)^{D}\leq C e^{Dvs/\alpha}.\label{eq:usingZX-in}
\end{align}

For the latter case, we use \cref{eq:LR-ZX-noY} to bound the commutator norm:
\begin{align}
        &\sum_{r>r_*(s)} \norm{\comm{V''_r(s),e^{is\gamma \bar H''}Oe^{-is\gamma\bar H''}}}\nonumber\\
        &\leq C\sum_{r> r_*(s)}  \norm{V''_r(s)}\norm {O} \left(\frac{e^{vs}}{r^{\alpha-D}}+e^{vs-\mu r}\right)\\
        &\leq C\sum_{r> r_*(s)}  \left(\frac{e^{vs}}{r^{\alpha-2D+1}}+r^{D-1}e^{vs-\mu r}\right)\\
        &\leq C\left(\frac{e^{vs}}{r_*(s)^{\alpha-2D}}+r_*(s)^{D-1}e^{vs-\mu r_*(s)}\right)\\
        &\leq C\left(e^{2Dvs/\alpha}+e^{vs\frac{D-1}{\alpha}}e^{vs-\mu e^{vs/\alpha}}\right)\\
        &\leq C e^{2Dvs/\alpha},\label{eq:usingZX-out}
\end{align}
where we use the same $C$ to denote different constants that may depend on $\mu,\alpha$. 
Note that while the bound in \cref{eq:LR-ZX-noY} is valid for $\alpha>D$, the sum over $r$ converges only when $\alpha>2D$.

Plugging \cref{eq:usingZX-in} and \cref{eq:usingZX-out} into \cref{EQ_delta_int} and integrating over $s$, we have
\begin{align}
        \norm{\delta}\leq C e^{-\kappa' \nmax} e^{2Dvt/\alpha},
\end{align}
which is the result presented in \cref{sec:PRB}. 
Again, $\delta$ is only small for up to time $t_*\propto \nmax\propto 1/T$, which is expected because the region inside the light cone implied by this bound expands exponentially fast with time. 

\subsection{Using Else~\etal~\cite{Else18}'s bound}
Instead of using Gong~\etal~'s bound, we now use the bound in Else~\etal~\cite{Else18}, which already holds for many-body interactions.
The bound states that when $\abs{X}=1$,
{\medmuskip=-0mu
\thinmuskip=-0mu
\thickmuskip=-0mu
\nulldelimiterspace=0pt
\scriptspace=0pt
\begin{align}
    \norm{\comm{A(t),B}}\leq C \norm A\norm B \left\{
     \exp\left(vt-r^{1-\sigma}\right)
     +\frac{(vt)^{1+D/(1-\sigma)}}{r^{\sigma (\alpha-D)}}
    \right\},\label{eq:LR-Else}
\end{align}}
where $1>\sigma>(D+1)/(\alpha-D+1)$ is a constant that we can choose. 
Since our aim is to prove an exponential heating time for $\alpha$ as small as possible, we need the algebraic tail exponent $\sigma(\alpha-D)$ to be as large as possible.
So we will assume that we pick some $\sigma$ very close to 1.

First, let us look at the light cone generated by \cref{eq:LR-Else}.
The first term of the bound gives a light cone $t\gtrsim r^{1-\sigma}$, while the second term gives $t\gtrsim r^{(1-\sigma)\frac{\sigma(\alpha-D)}{D+1-\sigma}}$.
Since we are choosing $\sigma$ close to 1, $\frac{\sigma(\alpha-D)}{D+1-\sigma}$ will be larger than 1 when $\alpha>2D$.
The former light cone, i.e. $t\gtrsim r^{1-\sigma}$, is therefore looser and thus dominates the latter. 
In the rest of the calculation, we take $r_*(t) = t^{1/(1-\sigma)}$ to be the light cone boundary.

Similar to \cref{eq:usingZX-in}, we get an upper bound for the terms inside the light cone:
\begin{align}
        &\sum_{r\leq r_*(s)} \norm{\comm{V''_r(s),e^{is\gamma \bar H''}Oe^{-is\gamma\bar H''}}}\nonumber\\
        &\quad\quad\quad\quad\leq 2\sum_{r\leq r_*(s)}  \norm{V''_r(s)}\norm {O} \nonumber\\
        &\quad\quad\quad\quad\leq C r_*(s)^{D}
        \leq C s^{D/(1-\sigma)}.\label{eq:usingElse-in}
\end{align}

For the terms outside the light cone, we use \cref{eq:LR-Else}:
\begin{align}
        &\sum_{r>r_*(s)} \norm{\comm{V''_r(s),e^{is\gamma \bar H''}Oe^{-is\gamma\bar H''}}}\nonumber\\
        &\leq \sum_{r> r_*(s)}  \norm{V''_r(s)}\norm {O} \left(e^{vs-r^{1-\sigma}}+\frac{(vs)^{1+D/(1-\sigma)}}{r^{\sigma (\alpha-D)}}\right)\nonumber\\
        &\leq C\sum_{r> r_*(s)}   \left(r^{D-1}e^{vs-r^{1-\sigma}}+\frac{(vs)^{1+D/(1-\sigma)}}{r^{\sigma (\alpha-D)-D+1}}\right)\nonumber\\
        &\leq C\left(\frac{1}{D}\xi\left(\frac{D}{1-\sigma}\right)e^{vs} {r_*^{D}e^{-r_*^{1-\sigma}}}+\frac{(vs)^{1+D/(1-\sigma)}}{r_*^{\sigma (\alpha-D)-D}}\right)\nonumber\\
        &\leq C\left(\xi\left(\frac{D}{1-\sigma}\right) s^{D/(1-\sigma)}+\frac{(vs)^{1+D/(1-\sigma)}}{s^{\frac{\sigma (\alpha-D)-D}{1-\sigma}}}\right)\nonumber\\
        &\leq C\xi\left(\frac{D}{1-\sigma}\right) s^{\frac{D}{1-\sigma}},
        \label{eq:usingElse-out}
\end{align}
where $\xi(x) \equiv \frac{1}{x}2^x \Gamma(x) $, $\Gamma$ is the Gamma function, and we again absorb all constants that may depend on $D$ alone into the constant $C$. 
We drop the second term in the second to last inequality because for $\sigma$ arbitrarily close to 1 and $\alpha > 2D$ (see below), the second term may be upper-bounded by the first.
To estimate the sum over $r$, we have used \cref{lem:forElse} in \cref{sec:sums}.
Plugging \cref{eq:usingElse-in} and \cref{eq:usingElse-out} into \cref{EQ_delta_int} and integrating over time, we get
\begin{align}
        \norm{\delta}\leq C e^{-\kappa' \nmax} \xi\left(\frac{D}{1-\sigma}\right)  t^{\frac{D}{1-\sigma}+1}.
\end{align}
Thus, the difference is small up to an exponentially long time $t_*\propto e^{\kappa' \nmax\frac{1-\sigma}{D + 1 - \sigma}}$.
The sum over $r$ converges if $\sigma(\alpha-D)>D$, or equivalently $\alpha > D\left(1+\frac{1}{\sigma}\right)$. 
Since $\sigma$ can be chosen arbitrarily close to 1, this condition is effectively equivalent to $\alpha>2D$.

One should be careful, however, in taking the limit $\sigma$ goes to one since 
i) the heating time $t_*\propto e^{\kappa' \nmax\frac{1-\sigma}{D + 1 - \sigma}}$ is no longer exponential in $\nmax$ and
ii) the prefactor ${\xi\left(\frac{D}{1-\sigma}\right)}$ diverges faster than exponentially in this limit. 
Nevertheless, the analysis is still valid for fixed values of $\sigma<1$. 

\subsection{Using Tran~\etal~\cite{Tran18}'s bound}

In addition to Else~\etal~\cite{Else18}'s bound, we can also use the bound in Tran~\etal~\cite{Tran18} [see also \cref{eq:Minh-r0=1} for a generalization to $k$-body interactions], which also works for $\alpha>2D$.
Compared to the bound in Else~\etal~, the bound in Tran~\etal~ has a tighter light cone $r_*(s) = s^{(\alpha-D)/(\alpha-2D)}$, but it decays with the distance $r$ as $r^{\alpha-2D}$, slower than the tail $r^{\sigma(\alpha-D)}$ in Else~\etal~ when $\sigma>(\alpha-2D)/(\alpha-D)$.

Similar to before, we further divide $V''_r(s)$ into those with $r\leq r_*(s)$ and $r>r_*(s)$.
For the terms inside the light cone, we again bound:
\begin{align}
        &\sum_{r\leq r_*(s)} \norm{\comm{V''_r(s),e^{is\gamma \bar H''}Oe^{-is\gamma\bar H''}}}\nonumber\\
        &\leq 2\sum_{r\leq r_*(s)}  \norm{V''_r(s)}\norm {O} \nonumber\\
        &\leq C r_*(s)^{D}
        \leq C s^{D(\alpha-D)/(\alpha-2D)}.\label{eq:usingMinh-in}
\end{align}

For the terms outside the light cone, we use \cref{eq:Minh-r0=1} with $\phi(X) = 1$:
\begin{align}
        &\sum_{r>r_*(s)} \norm{\comm{V''_r(s),e^{is\gamma \bar H''}Oe^{-is\gamma\bar H''}}}\nonumber\\
        &\leq \sum_{r> r_*(s)}  \norm{V''_r(s)}\norm {O} \left(\frac{s^{\alpha-D}}{r^{\alpha-2D}}+sr^{D-1}e^{-r/s}\right)
        \nonumber\\
        &\leq C\sum_{r> r_*(s)}  \left(\frac{s^{\alpha-D}}{r^{\alpha-3D+1}}+sr^{2D-2}e^{-\mu r/s}\right)\nonumber\\
        &\leq C\left(\frac{s^{\alpha-D}}{r_*(s)^{\alpha-3D}}+s^2r_*(s)^{2D-2}e^{-\mu r_*(s)/s}\right)\nonumber\\
        &\leq C\left(s^{\frac{D(\alpha-D)}{\alpha-2D}}+s^2s^{2(\alpha-D)(D-1)/(\alpha-2D)}e^{-\mu s^{D/(\alpha-2D)}}\right)\nonumber\\
        &\leq  Cs^{\frac{D(\alpha-D)}{\alpha-2D}},\label{eq:usingMinh-out}
\end{align}
where we have dropped the second term in the second to last inequality because it is exponentially small in $s$ and can be upper bounded by the first term.
Note that we require $\alpha>3D$ in order for the sum over $r$ to converge. 

Plugging \cref{eq:usingMinh-in} and \cref{eq:usingMinh-out} into \cref{EQ_delta_int} and integrating over time, we get
\begin{align}
        \norm{\delta}\leq C e^{-\kappa' \nmax} t^{\frac{D(\alpha-D)}{\alpha-2D}+1}.
\end{align}
Thus, the difference is small up to an exponentially long time $t_*\propto e^{\kappa' \nmax\frac{\alpha-2D}{\alpha(D+1)-D(D+2)}}$.
Compared to using Else~\etal~'s bound, this analysis works only when $\alpha>3D$.
However, within this regime, the exponent of the heating time using this bound is larger than using Else~\etal~.
This is a manifestation of the trade-off between the tail and the light cone when switching from Else~\etal~ to Tran~\etal~ bound. 

\section{Mathematical preliminaries}\label{sec:math}
This section contains mathematical details omitted from the previous sections for clarity.
In \cref{sec:comm}, we discuss the properties of the set of power-law Hamiltonians defined in \cref{DEF_power-law}.
In \cref{sec:sums}, we present some bounds on discrete sums.

\subsection{Properties of the set $\Hpl$ of power-law Hamiltonians}
\label{sec:comm}

In this section, we explore some properties of $\Hpl$ that are
useful for proving that the effective Hamiltonian is also power-law [See \cref{sec:Gq}].

We recall from the main text that $\Hpl$ is the set of power-law Hamiltonians with the exponent $\alpha$.
In addition, $\Hpl^{(k)}$ is the subset of $\Hpl$ which contains all power-law Hamiltonians whose local support size [see \cref{DEF_power-law}] is at most $k+1$.
For a real positive constant $a$, we also denote by $a\Hpl$ the set of Hamiltonians $H$ such that $a^{-1} H$ is a power-law Hamiltonian with the exponent $\alpha$. 
It is straightforward to prove the following identities:
\begin{align}
    &a \Hpl + b \Hpl \subset (a+b)\Hpl, \\
    &a \Hpl \subset b \Hpl \text{ if } a\leq b.
\end{align}
The following lemma is particularly useful for the adjoint operation:
\begin{lemma}\label{LEM_adHpl}
For $\alpha>D$, if $H_1 \in a \Hpl^{(k_{1})}, H_2\in b \Hpl^{(k_2)}$ for some positive constants $a,b,k_1,k_2$, then $\ad_{H_1} H_2 \in ab \lambda k_{\max}\Hpl^{(k_1+k_2)}$, where $\lambda$ is a constant to be defined later and $k_{\max}=\max\{k_1,k_2\}$.
\end{lemma}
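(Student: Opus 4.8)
The plan is to verify the power-law decay condition of \cref{DEF_power-law} directly for the operator $\ad_{H_1}H_2 = [H_1, H_2]$. Write $H_1 = \sum_X h_X$ and $H_2 = \sum_Y g_Y$, so that $[H_1,H_2] = \sum_{X,Y} [h_X, g_Y]$, where the only surviving terms have $X \cap Y \neq \varnothing$ (otherwise the commutator vanishes). Each surviving term $[h_X,g_Y]$ is supported on $X \cup Y$, a set of at most $k_1 + k_2$ sites (in fact $k_1 + k_2 - 1$ since they overlap, but $k_1+k_2$ suffices), which accounts for the superscript $(k_1+k_2)$ in the conclusion. The task is then to show that for any two distinct sites $u,v$,
\begin{align}
\sum_{\substack{X,Y:\, X\cap Y\neq\varnothing\\ (X\cup Y)\ni u,v}} \norm{[h_X,g_Y]} \leq \frac{ab\,\lambda k_{\max}}{\dist(u,v)^\alpha},
\end{align}
and similarly bound $\sum_{X\cup Y \ni u}\norm{[h_X,g_Y]}$ by $ab\lambda k_{\max}$ to control the single-site norms.

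First I would split the left-hand sum according to where $u$ and $v$ lie relative to $X$ and $Y$. Since $(X\cup Y)\ni u,v$ and $X\cap Y\neq\varnothing$ (pick a site $w$ in the intersection), each of $u,v$ lies in $X$ or in $Y$. In the case $u\in X$, $v\in Y$ (and the symmetric case), I would use the bound $\norm{[h_X,g_Y]}\le 2\norm{h_X}\norm{g_Y}$ together with the common site $w$: $\sum \le 2\sum_w \big(\sum_{X\ni u,w}\norm{h_X}\big)\big(\sum_{Y\ni v,w}\norm{g_Y}\big)$. The inner sums are controlled by \cref{DEF_power-law}: $\sum_{X\ni u,w}\norm{h_X}\le a/\dist(u,w)^\alpha$ (or $\le a$ if $w=u$), and likewise for $g_Y$. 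Summing over $w$ then produces a convolution $\sum_w J_{uw}J_{wv}$ of the kernels $J_{ij}=\dist(i,j)^{-\alpha}$ (with $J_{ii}=1$), which is the same object handled in \cref{sec:Gong}: for $\alpha > D$ it is bounded by $\text{const}\cdot\lambda/\dist(u,v)^\alpha$, where $\lambda = 1+\sum_{j\neq i}\dist(i,j)^{-\alpha}$ is exactly the convergent constant introduced there. In the case $u,v\in X$ (so the $u,v$-separation is ``charged'' entirely to $h_X$), I would bound $\sum \le 2\big(\sum_{X\ni u,v}\norm{h_X}\big)\big(\sup_{w}\sum_{Y\ni w}\norm{g_Y}\big)\le 2\cdot (a/\dist(u,v)^\alpha)\cdot b\lambda$, using the local-norm estimate $\sum_{Y\ni w}\norm{g_Y}\le b\lambda$ that follows from \cref{DEF_power-law} (this is the same $\lambda$ as above); the case $u,v\in Y$ is symmetric. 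Combining the (at most four) cases gives the claimed $1/\dist(u,v)^\alpha$ decay; the factor $k_{\max}$ enters because a site $w\in X\cap Y$ can be any of the at most $k_{\max}$ sites of the smaller set, or — depending on how one organizes the ``$w=u$'' degenerate terms — because one of $u,v$ may need to be summed over the support of one operator, contributing a multiplicity of at most $\abs X$ or $\abs Y \le k_{\max}$. The single-site bound is obtained the same way, dropping the $1/\dist(u,v)^\alpha$ factor and using $\sum_{Z\ni u}\norm{(\cdot)_Z}\le (\text{coeff})\lambda$ on each factor.

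The main obstacle is bookkeeping rather than conceptual: one must carefully enumerate the ways the constraint $(X\cup Y)\ni u,v$, $X\cap Y\neq\varnothing$ can be realized, make sure the ``free'' intermediate site $w$ is always summed against a genuinely convergent convolution (which needs $\alpha>D$, hence the hypothesis), and track where the combinatorial factor $k_{\max}$ is spent so that the final constant is $ab\lambda k_{\max}$ and not something larger. A minor subtlety is the degenerate situation $w\in\{u,v\}$, where a kernel value $J_{ww}=1$ replaces a decaying factor; these terms are finite in number (bounded by the support sizes) and are absorbed into the $k_{\max}$ prefactor, which is precisely why the statement carries that factor. Once these cases are assembled, the lemma follows; it is then used repeatedly — with $k_{\max}$ at each stage bounded by $q_0$ — to push the induction in \cref{LEM_Gq} and \cref{LEM_Gq>} through.
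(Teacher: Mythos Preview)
Your approach is essentially the paper's: both split $\sum_{Z\ni i,j}\norm{[h_X,g_Y]}$ into cases according to where $i,j$ sit relative to $X$ and $Y$, use the power-law condition directly when both sites lie in one set, and invoke the reproducibility/convolution estimate via an intermediate site $w\in X\cap Y$ when they lie in different sets. The paper carries this out with a finer nine-case mutually exclusive partition (its Table~\ref{TAB:LEM_adHPl_indicators}), whereas your four overlapping cases cover the same ground.

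The one genuine slip is your written bound in the case $u,v\in X$. You cannot replace the sum over $Y$ intersecting $X$ by $\sup_{w}\sum_{Y\ni w}\norm{g_Y}$, because different $Y$'s may meet $X$ at different sites; the correct estimate is
\[
\sum_{Y:\,Y\cap X\neq\varnothing}\norm{g_Y}\;\le\;\sum_{w\in X}\sum_{Y\ni w}\norm{g_Y}\;\le\;|X|\cdot b\lambda_0,
\]
and it is precisely this factor $|X|\le k_1+1$ (together with the symmetric $|Y|\le k_2+1$ from the case $u,v\in Y$) that produces the $k_{\max}$ in the statement. In the paper this is exactly the content of the cases $\xi_4$ and $\xi_9$. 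Your suggestion that $k_{\max}$ instead comes from the degenerate $w\in\{u,v\}$ terms in the convolution case is off: those contribute only $O(1)$ and are absorbed into the constant $\lambda$, not into the $k_{\max}$ prefactor. Once you place the support-size factor in the right case, the rest of your outline goes through.
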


\begin{proof}
Write  $H_1 = \sum_X a_X,H_2 = \sum_Y b_Y, \ad_{H_1} H_2 = \sum_Z h_Z$, where $h_Z = \ad_{h_X} h_Y$ and $Z = X\cup Y$. By our definition of power-law Hamiltonians, we have:
\begin{align}
    \sum_{X \ni i,j}\norm{a_{X}} \leq \frac{a}{\dist(i,j)^{\alpha}}, 
    \quad \sum_{Y \ni i,j}\norm{b_{Y}} \leq \frac{b}{\dist(i,j)^{\alpha}}. 
\end{align}
When $\alpha>D$, it is also straightforward to prove that $\sum_{X\ni i} \norm{a_X} \leq a \lambda_0$ for all $i$, where $\lambda_0$ is a constant that depends only on $\alpha,D$.

Note that $h_Z\neq 0$ only if $X\cap Y\neq \varnothing$. 
We seek to bound $\sum_{Z\ni i,j}\norm{h_Z}$ which sums over $Z = X\cup Y \ni i,j$ and $X \cap Y \neq \varnothing$. 
We discuss some useful notations. 
We will occasionally rewrite or label summations with restrictions using the indicator function $\xi(A)$ where $\xi(A) = 1,0$ if $A$ is true, false respectively. 
There are 9 mutually exclusive cases [\cref{TAB:LEM_adHPl_indicators}], satisfying $i, j \in X \cup Y$ depending on whether $i,j$ are in $X,Y$, or both.

\begin{table}[ht!]
\centering
\begin{tabular}{CCCCC}
\toprule
         & \in X & \notin X & \in Y & \notin Y \\ 
\colrule 
\xi_{1} & i,j   & -      & i,j   & -      \\ 
\xi_{2} & i,j   & -      & i     & j        \\ 
\xi_{3} & i,j   & -      & j     & i        \\ 
\xi_{4} & i,j   & -      & -   & i,j      \\ 
\xi_{5} & i     & j        & i,j   & -      \\ 
\xi_{6} & i     & j        & j     & i        \\ 
\xi_{7} & j     & i        & i,j   & -      \\ 
\xi_{8} & j     & i        & i     & j        \\ 
\xi_{9} & -   & i,j      & i,j   & -      \\ 
\botrule
\end{tabular}
\caption{Mutually exclusive indicator functions for Lemma \ref{LEM_adHpl}. For example, $\xi_1= 1$ if all of the conditions in the first row, i.e. $i,j\in X$ and $i,j\notin Y$, hold and $\xi_1 = 0$ otherwise.}\label{TAB:LEM_adHPl_indicators}
\end{table}

Thus, the indicator function $\xi(X \cup Y \ni i,j)$ may be written as a sum of indicator functions of mutually exclusive events listed in the table: $\xi(X \cup Y \ni i,j) = \sum_{n=1}^{9}\xi_{n}.$ 
The overall sum that we want to bound can be written as a sum over the nine cases:
\begin{align}
    &\sum_{Z \ni i,j}\norm{h_{Z}} 
    = \sum_{X \cup Y \ni i,j}\norm{[a_{X}, b_{Y}]} \nonumber\\
    &\leq 2\sum_{X}\sum_{Y} \norm{a_{X}}\norm{b_{Y}} \xi(X \cap Y \neq \varnothing)\xi(X \cup Y \ni i,j) \nonumber\\
    &=2\sum_{n=1}^{9}\sum_{X}\sum_{Y} \norm{a_{X}}\norm{b_{Y}} \xi(X \cap Y \neq \varnothing)\xi_{n},
\end{align}
and we will bound each of the nine cases individually. 
We will often eliminate the condition that $X \cap Y \neq \varnothing$, which can only make the sum larger, and introduce an inequality by summing over all sets $X$ or $Y$. To illustrate our technique, consider first the contribution from $\xi_{5}$: 
\begin{align}
    &2\sum_{X}\sum_{Y}\norm{a_X}\norm{b_Y}\xi(X \cap Y \neq \varnothing) \xi_{5} \nonumber\\
    &\leq 2\sum_{X \ni i} \sum_{Y\ni i,j}\norm{a_X}\norm{b_Y} \nonumber\\
    &\leq 2\sum_{X \ni i}\norm{a_X} \frac{b}{\dist(i,j)^{\alpha}} 
     \leq  \frac{2\lambda_0 ab}{\dist(i,j)^\alpha},
\end{align}
where the first inequality comes from ignoring $j\notin X$ and the second comes from $H_{2}$ being a power-law Hamiltonian. 

The bound on the term corresponding to $\xi_7$ follows analogously since we simply switch $i, j$. 
Similarly, the terms corresponding to $\xi_{2},\xi_{3}$ switch only the roles of $X, Y$ compared to $\xi_{5},\xi_{7}$. 
Meanwhile, analyzing the term corresponding to $\xi_1$ yields:
\begin{align}
    &2\sum_{X}\sum_{Y} \norm{a_X}\norm{b_Y}\xi_{1}\xi(X \cap Y \neq \varnothing) \nonumber\\
    &= 2\sum_{X\ni i,j}\sum_{Y \ni i,j} \norm{a_X}\norm{b_Y} \nonumber\\
    &\leq \frac{2ab}{\dist(i,j)^{2\alpha}} \leq \frac{2ab}{\dist(i,j)^{\alpha}},
\end{align}
where we take into account $\dist(i,j)\geq 1$ for all $D$. 

Upper bounding the term corresponding to $\xi_6$ is a bit trickier.
Since $X\cap Y\neq \varnothing$, there exists a site $\ell\neq i,j$ such that $\ell\in X\cap Y$.
Rewriting the term corresponding to $\xi_6$ as a sum over $\ell$, we have:
\begin{align}
    &\sum_{\substack{X\ni i\\X\not\ni j}} \sum_{\substack{Y\ni j\\Y\not\ni i}}\norm{a_X}\norm{b_Y}\xi(X \cap Y \neq \varnothing) \nonumber\\
    &\leq 2\sum_{\ell\neq i, j} \sum_{X\ni i, \ell}\sum_{Y \ni j, \ell} 2\norm{a_X}\norm{b_Y}\nonumber\\
    &\leq 2\sum_{\ell\neq i, j} \frac{a }{\dist(i,\ell)^{\alpha}}\frac{b  }{\dist(\ell,j)^{\alpha}} 
    \leq \frac{2\lambda_1 a b  }{\dist(i,j)^{\alpha}},
\end{align}
where the last inequality comes from the reproducibility condition~\cite{Hastings06}, applicable when $\alpha>D$, and $\lambda_1$ is a constant that depends only on $D,\alpha$.
The term corresponding to $\xi_{8}$ contributes the same as $\xi_{6}$, as it only switches the roles of $i, j$. 

Finally, we bound the terms corresponding to $\xi_4, \xi_9$.
For $\xi_{4}$, we are trying to bound the sum:
    \begin{align}
        \sum_{X \ni i,j} \sum_{Y \not\ni i,j}2\norm{a_{X}}\norm{b_{Y}}\xi(X \cap Y \neq \varnothing).
    \end{align}
The non-empty intersection means that for there to be a nonzero contribution, $\exists \ell \neq i, j$ such that $\ell \in X, Y$. Further note that by assumption the maximum extent of $X$ is $k_{1} + 1$ and therefore there are at most $k_{1} - 1$ sites distinct from $i,j$ where $Y$ can intersect with $X$. 
We bound this as follows:
\begin{align}
            &2\sum_{X \ni i,j} \sum_{Y \not\ni i,j}\norm{a_{X}}\norm{b_{Y}} \xi(X \cap Y \neq \varnothing)\nonumber\\ 
            &\leq 2\sum_{X \ni i, j} \sum_{\substack{\ell\in X\\ \ell\neq i,j}}\sum_{Y\ni \ell}\norm{a_{X}}\norm{b_{Y}}  \nonumber\\
            &\leq 2\sum_{X \ni i, j}\norm{a_{X}} \sum_{\substack{\ell\in X\\\ell\neq i,j}}  \lambda_0 b  
            \leq \frac{2\lambda_0(k_1-1)ab}{\dist(i,j)^\alpha}.
\end{align}
We bound the term corresponding to $\xi_9$ similarly by switching the role of $X,Y$. 
Collecting everything, we have the lemma with $\lambda = 2(6\lambda_0+2\lambda_1+1)$.
\end{proof}\subsection{Bounds on discrete sums}\label{sec:sums}
In this section, we provide bounds on some discrete sums used in the main text.
\begin{lemma}\label{LEM_Sum_inv_mult}
For all $1\leq k\leq q$, we have the following inequalities:
\begin{align}
	&\sum_{\substack{
			1\leq i_1,\dots,i_k\leq q\\
			i_1+\dots+i_k=q}} \prod_{j=1}^k i_j! \leq \frac{q!}{(k-1)!},\label{LEM_sum_i_part1}\\
	&\sum_{\substack{
			0\leq i_1,\dots,i_k\leq q\\
			i_1+\dots+i_k=q}} \prod_{j=1}^k i_j! \leq 2^k q!.\label{LEM_sum_i_part2}
\end{align}
\begin{proof}
We first bound
    \begin{align}
        \sum_{\substack{
			1\leq i_1,\dots,i_k\leq q\\
			i_1+\dots+i_k=q}} \prod_{j=1}^k i_j! \leq \binom{q-1}{k-1} \max_{\substack{
			1\leq i_1,\dots,i_k\leq q\\
			i_1+\dots+i_k=q}} \prod_{j=1}^k i_j!. 
    \end{align}
For positive integers $a \geq b$, we have $(a+b-1)! = a!(a+b-1)\cdots(a+1) \geq a! b!$ with equality if either $a, b = 1$. This implies that the maximal product occurs for some $i_j = q -k+1$ and $i_{k\neq j} =1$ (we omit the simple proof by induction), yielding
    \begin{align}
        &\binom{q-1}{k-1} \max_{\substack{
			1\leq i_1,\dots,i_k\leq q\\
			i_1+\dots+i_k=q}} \prod_{j=1}^k i_j! \nonumber\\
            &\leq  \frac{(q-1)!}{(k-1)!(q-k)!} (q-k+1)! \nonumber\\
            &\leq \frac{(q-1)!}{(k-1)!}(q-k+1) \leq \frac{q!}{(k-1)!},
    \end{align}
as $k \neq 0$ by the summation restrictions. \cref{LEM_sum_i_part2} is essentially the same as \cref{LEM_sum_i_part1} with some indices allowed to be 0. For example, if $i_1 = 0$ while the other $i$ are nonzero, it is just \cref{LEM_sum_i_part1} with $k \rightarrow k-1$.
This part of the sum is then crudely upper bounded by $q!$, while summing over all possible choices of zero indices leads to a factor $2^k$.
\end{proof} 
\end{lemma}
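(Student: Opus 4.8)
\emph{Proof proposal.} The plan is to prove \cref{LEM_sum_i_part1} first and then bootstrap \cref{LEM_sum_i_part2} from it by a union-over-supports argument.

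For \cref{LEM_sum_i_part1}, I would bound the sum by (number of terms) $\times$ (maximal summand). The number of ordered compositions $i_1+\dots+i_k=q$ with each $i_j\ge 1$ is $\binom{q-1}{k-1}$ by stars and bars, so it suffices to show $\prod_{j=1}^k i_j! \le (q-k+1)!$ for every such composition. I would establish this by an exchange argument built on the elementary inequality $a!\,b!\le(a+b-1)!$ for integers $a,b\ge 1$, which follows by factoring out $\min(a,b)!$ and comparing the remaining factors one by one. Given a composition with two parts $i_a,i_b\ge 2$, replacing them by $i_a+i_b-1$ and $1$ preserves the sum $q$ and the number of parts, keeps all parts in $\{1,\dots,q\}$, and multiplies $\prod_j i_j!$ by $(i_a+i_b-1)!/(i_a!\,i_b!)\ge 1$; since each such move strictly decreases the count of parts that are $\ge 2$, after at most $k-1$ moves we reach $(q-k+1,1,\dots,1)$, which therefore attains the maximum $(q-k+1)!$. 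Combining,
\begin{align}
\sum_{\substack{1\le i_1,\dots,i_k\le q\\ i_1+\dots+i_k=q}}\prod_{j=1}^k i_j!
\le \binom{q-1}{k-1}(q-k+1)!
= \frac{(q-1)!\,(q-k+1)}{(k-1)!}
\le \frac{q!}{(k-1)!},
\end{align}
using $q-k+1\le q$.

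For \cref{LEM_sum_i_part2}, I would group the terms by the set $S\subseteq\{1,\dots,k\}$ of coordinates on which $i_j\ge 1$; since $q\ge 1$ is forced by $1\le k\le q$, the set $S$ is nonempty, and because $0!=1$ the contribution of a fixed $S$ with $|S|=m$ equals the sum over compositions of $q$ into $m$ positive parts, which by \cref{LEM_sum_i_part1} is at most $q!/(m-1)!\le q!$. Summing over the $2^k$ choices of $S$ yields the bound $2^k q!$.

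The main obstacle is the maximization step — rigorously confirming that the most unbalanced composition is optimal. The merging/exchange argument is the cleanest route, but it requires checking both that the move never decreases $\prod_j i_j!$ (exactly the inequality $a!\,b!\le(a+b-1)!$) and that it terminates (a monovariant: the number of parts exceeding $1$); an induction on $k$ is an alternative. Everything else is routine counting.
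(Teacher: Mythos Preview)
Your proposal is correct and follows essentially the same approach as the paper: bound \cref{LEM_sum_i_part1} by (number of compositions)$\times$(maximal summand) using $\binom{q-1}{k-1}$ and $(q-k+1)!$, with the key inequality $a!\,b!\le(a+b-1)!$, and then derive \cref{LEM_sum_i_part2} by partitioning according to which indices are positive and summing over the at most $2^k$ choices. The only difference is that you spell out the exchange/merging argument for the maximizer where the paper simply invokes an omitted induction.
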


\begin{corollary}
For all $1\leq k\leq q$, we have:
\begin{align}
        \sum_{\substack{
            1\leq i_1,\dots,i_k\leq q_0\\
            i_1+\dots+i_k=q_0}} \prod_{j=1}^k (i_j-1)! \leq 2^{k}(q_{0}-k)!.
\end{align}
\begin{proof}
Define $p_{j} = i_{j} - 1$ such that $0 \leq p_{j} \leq q_{0}-1$ and  $p_{1} + \cdots + p_{k} = q_{0} - k$. This second condition implies that we may simplify the first condition to $0 \leq p_{j} \leq q_{0}-k$. Therefore: 
\begin{align}
        &\sum_{\substack{
            1\leq i_1,\dots,i_k\leq q_0\\
            i_1+\dots+i_k=q_0}} \prod_{j=1}^k (i_j-1)!     \nonumber\\
            =&
        \sum_{\substack{
            0\leq p_1,\dots,p_k\leq q_0-k\\
            p_1+\dots+p_k=q_0-k}} \prod_{j=1}^k p_{j}! \leq 2^{k}(q_{0}-k)!, 
\end{align}
where the last inequality is from \cref{LEM_sum_i_part2}. 
\end{proof}
\end{corollary}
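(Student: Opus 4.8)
The plan is to reduce the sum on the left to one already controlled by \cref{LEM_Sum_inv_mult}. First I would perform the change of variables $p_j = i_j - 1$ for $j=1,\dots,k$. Since each $i_j$ lies in $\{1,\dots,q_0\}$ with $i_1+\dots+i_k = q_0$, the new indices $p_j$ are nonnegative integers with $p_1+\dots+p_k = q_0-k$, and because every $p_j\geq 0$ each one automatically obeys $0\leq p_j\leq q_0-k$. Thus this substitution is a bijection from the index set of the corollary's sum onto $\{(p_1,\dots,p_k): 0\leq p_j\leq q_0-k,\ p_1+\dots+p_k=q_0-k\}$, and it carries $\prod_{j}(i_j-1)!$ to $\prod_j p_j!$.

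The second step is simply to invoke the second inequality of \cref{LEM_Sum_inv_mult}, i.e.\ \cref{LEM_sum_i_part2}, with its parameter $q$ replaced by $q_0-k$ (keeping the same number $k$ of summation variables). This yields
\begin{align}
\sum_{\substack{1\leq i_1,\dots,i_k\leq q_0\\ i_1+\dots+i_k=q_0}}\prod_{j=1}^k (i_j-1)!
&= \sum_{\substack{0\leq p_1,\dots,p_k\leq q_0-k\\ p_1+\dots+p_k=q_0-k}}\prod_{j=1}^k p_j! \nonumber\\
&\leq 2^k (q_0-k)!,
\end{align}
which is exactly the assertion.

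I do not expect a genuine obstacle here: the argument is one change of variables followed by a citation of the preceding lemma. The only points that need care are bookkeeping. The hypothesis $1\leq k\leq q_0$ ensures $q_0-k\geq 0$, so the rewritten sum is well posed; and one should note that \cref{LEM_sum_i_part2} remains valid in the regime $q_0-k<k$ (its proof proceeds by discarding the zero indices and appealing to \cref{LEM_sum_i_part1} on the remaining, necessarily at most $q_0-k$, nonzero ones), which in particular handles the boundary case $k=q_0$, where the left side is $(0!)^k=1\leq 2^k=2^k(q_0-k)!$.
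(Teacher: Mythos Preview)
Your proposal is correct and essentially identical to the paper's own proof: the same substitution $p_j=i_j-1$ reduces the sum to the form of \cref{LEM_sum_i_part2} with $q$ replaced by $q_0-k$, after which that lemma gives the bound $2^k(q_0-k)!$. Your additional remarks on the bijection and the boundary case $k=q_0$ are fine bookkeeping but not needed beyond what the paper already does.
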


\begin{lemma}\label{lem:c1}
For all $1\leq k\leq q$, we have:
\begin{equation}
\sum_{k = 1}^{q_{0}} \frac{2^{k}q_{0}^{k}c^{-k}(q_{0}-k)!}{q_{0}!k!} \leq \frac{e}{\sqrt{2\pi}}(e^{2e/c}-1)
\end{equation}
\begin{proof}
Using Stirling's approximation, $\sqrt{2\pi} n^{n + \frac{1}{2}} e^{-n}\leq n! \leq e n^{n + \frac{1}{2}}e^{-n}$ for $q_0!$ and $(q_0-k)!$, we can bound:
    \begin{align}
        &\sum_{k = 1}^{q_{0}} \frac{2^{k}q_{0}^{k}c^{-k}}{k!}\frac{(q_{0}-k)!}{q_{0}!} \nonumber\\
        &\leq \sum_{k = 1}^{q_{0}} 
            \frac{2^{k}q_{0}^{k}c^{-k}}{k!}
            \frac{e}{\sqrt{2\pi}}
            \frac{(q_0-k)^{q_0-k}}{q_{0}^{q_{0}}}
            \frac{\sqrt{q_0-k}}{\sqrt{q_0}}
            \frac{e^{-(q_0-k)}}{e^{-q_{0}}} \nonumber\\
        &\leq 
            \frac{e}{\sqrt{2\pi}} \sum_{k = 1}^{q_{0}} 
            \frac{2^{k}e^{k}c^{-k}}{k!}
            \underbrace{\frac{(q_0-k)^{q_0-k}}{q_{0}^{q_{0}-k}}}_{\leq 1}
            \underbrace{\frac{\sqrt{q_0-k}}{\sqrt{q_0}}}_{\leq 1}\nonumber\\    
        &\leq 
            \frac{e}{\sqrt{2\pi}} \sum_{k = 1}^{\infty} 
            \frac{2^{k}e^{k}c^{-k}}{k!}        
        = \frac{e}{\sqrt{2\pi}} (e^{2e/c}-1) .
    \end{align}
We note that the bound approaches 0 as $c\rightarrow\infty$.
\end{proof}
\end{lemma}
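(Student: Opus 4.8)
The plan is to control the factorial ratio $(q_0-k)!/q_0!$ with Stirling's bounds and then recognize what remains as a truncated exponential series. First I would invoke the two-sided estimate $\sqrt{2\pi}\, n^{n+1/2}e^{-n} \le n! \le e\, n^{n+1/2}e^{-n}$, applying the lower bound to $q_0!$ in the denominator and the upper bound to $(q_0-k)!$ in the numerator. This gives
\begin{align}
\frac{(q_0-k)!}{q_0!} \le \frac{e}{\sqrt{2\pi}}\,\frac{(q_0-k)^{q_0-k+1/2}}{q_0^{q_0+1/2}}\,e^{k},
\end{align}
where the factor $e^{k}$ comes from the ratio $e^{-(q_0-k)}/e^{-q_0}$ of the exponential parts.

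Next I would insert this into the summand, which carries the extra factor $2^{k}q_0^{k}c^{-k}/k!$, and regroup the powers of $q_0$. Writing $q_0^{k}/q_0^{q_0+1/2}=1/q_0^{q_0-k+1/2}$, the $q_0$-dependent piece becomes $\bigl((q_0-k)/q_0\bigr)^{q_0-k}\cdot\bigl((q_0-k)/q_0\bigr)^{1/2}$, and both factors are at most $1$ because $0\le q_0-k\le q_0$ (the edge case $k=q_0$ is read via $0^0=1$, consistent with the $n=0$ instance of Stirling). Hence each term is bounded by $\tfrac{e}{\sqrt{2\pi}}\,(2e/c)^{k}/k!$.

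Finally I would sum over $k$, extending the range from $1\le k\le q_0$ to all $k\ge 1$ since every term is nonnegative, to obtain
\begin{align}
\sum_{k=1}^{q_0}\frac{2^{k}q_0^{k}c^{-k}(q_0-k)!}{q_0!\,k!} \le \frac{e}{\sqrt{2\pi}}\sum_{k\ge 1}\frac{(2e/c)^{k}}{k!} = \frac{e}{\sqrt{2\pi}}\bigl(e^{2e/c}-1\bigr),
\end{align}
which is the claim; this form also makes manifest that the bound tends to $0$ as $c\to\infty$, as used in \cref{sec:Gq}.

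I do not expect a genuine obstacle here. The only care needed is bookkeeping of the half-integer exponents in Stirling's formula and verifying that, after redistributing the $q_0^{k}$ factor, every $q_0$-dependent quantity other than $(2e/c)^{k}/k!$ is at most $1$ uniformly in $k$; everything else is the routine identification of a Taylor series for $e^{x}-1$.
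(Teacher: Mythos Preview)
Your proposal is correct and follows essentially the same approach as the paper: apply the two-sided Stirling bounds to $(q_0-k)!$ and $q_0!$, absorb the resulting $e^{k}$ into the summand, observe that the remaining $q_0$-dependent factors $\bigl((q_0-k)/q_0\bigr)^{q_0-k}$ and $\bigl((q_0-k)/q_0\bigr)^{1/2}$ are each at most $1$, and extend the sum to the full exponential series for $e^{2e/c}-1$. The only difference is cosmetic---the paper writes the half-integer power as a separate $\sqrt{q_0-k}/\sqrt{q_0}$ factor rather than folding it into a single exponent---so there is nothing to add.
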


\begin{lemma} \label{lem:forElse}
For $D\in \mathbb N_{>0}, r_*>1,0<\eta<1$
\begin{align}
        \sum_{r> r_*}   r^{D-1}e^{-r^\eta}
        \leq \frac{2}{\eta}2^{D/\eta}\Gamma(D/\eta) r_*^{D} e^{-r_*^{\eta}},
\end{align}
where $\Gamma$ is the Gamma function.
\end{lemma}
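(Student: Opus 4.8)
\textbf{Proof proposal for Lemma~\ref{lem:forElse}.}

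The plan is to bound the sum by a fixed multiple of the integral $\int_{r_*}^{\infty} t^{D-1}e^{-t^\eta}\,dt$, evaluate that integral as an upper incomplete Gamma function via the substitution $u = t^\eta$, and then estimate the incomplete Gamma function, using crucially that $r_* > 1$. The first step needs no monotonicity of the summand: for every integer $r \ge 1$ and every $t$ in the unit interval $[r,r+1)$ we have $t^{D-1} \ge r^{D-1}$ and $e^{-t^\eta} \ge e^{-(r+1)^\eta} \ge e^{-1}e^{-r^\eta}$, the last step because $(r+1)^\eta - r^\eta \le \eta\, r^{\eta-1} \le \eta \le 1$ by the mean value theorem (and $r \ge 1$). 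Hence $r^{D-1}e^{-r^\eta} \le e\int_r^{r+1} t^{D-1}e^{-t^\eta}\,dt$, and since the intervals $[r,r+1)$ with $r$ an integer exceeding $r_*$ are disjoint and contained in $[r_*,\infty)$, summing gives $\sum_{r > r_*} r^{D-1}e^{-r^\eta} \le e\int_{r_*}^{\infty} t^{D-1}e^{-t^\eta}\,dt$.

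Next I would substitute $u = t^\eta$, so $t = u^{1/\eta}$, $dt = \tfrac1\eta u^{1/\eta-1}\,du$ and $t^{D-1} = u^{(D-1)/\eta}$; writing $s \equiv D/\eta$ and $x \equiv r_*^\eta$, the integral becomes $\tfrac1\eta\int_x^{\infty} u^{s-1}e^{-u}\,du = \tfrac1\eta\Gamma(s,x)$. Note that $s > D \ge 1$, that $x > 1$ because $r_* > 1$, and that $x^{s} = r_*^{D}$, which is precisely the $r_*$-dependence on the right-hand side of the claim. So it suffices to establish a bound of the form $\Gamma(s,x) \le C\,2^{s}\Gamma(s)\,x^{s}e^{-x}$ with $C$ small enough that $eC$ lands below the constant $2$ in the statement.

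To estimate $\Gamma(s,x)$ I would write it as $e^{-x}\int_0^{\infty}(x+v)^{s-1}e^{-v}\,dv$, split the $v$-integral at $v = x$, and use $x \ge 1$: on $v \le x$ we have $(x+v)^{s-1} \le (2x)^{s-1}$, and on $v > x$ we have $(x+v)^{s-1} \le (2v)^{s-1}$, giving $\Gamma(s,x) \le 2^{s-1}e^{-x}\bigl(x^{s-1}+\Gamma(s)\bigr)$; since $x > 1$ and $s > 1$ this is at most $2^{s-1}\bigl(1+\Gamma(s)\bigr)x^{s}e^{-x}$. As $\Gamma(s) \ge \min_{t>0}\Gamma(t) > 0.88$, the prefactor $e\cdot 2^{s-1}\bigl(1+\Gamma(s)\bigr)$ is at most an absolute constant times $2^{s}\Gamma(s)$, so $\sum_{r>r_*} r^{D-1}e^{-r^\eta}$ is bounded by a constant times $\tfrac1\eta 2^{D/\eta}\Gamma(D/\eta)\,r_*^{D}e^{-r_*^\eta}$.

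The main obstacle is not the architecture but bringing the numerical constant down to exactly $2$. The crude estimates above lose slightly too much precisely when $s = D/\eta$ and $x = r_*^\eta$ are both close to $1$ (that is, $D = 1$ with $\eta$ and $r_*$ near $1$), since $\Gamma$ dips below $1$ on $(1,2)$ and the factor $e$ from the integral comparison is then expensive. I would close that corner by sharpening the estimates there --- for instance, when $D = 1$ the summand $e^{-r^\eta}$ is decreasing, so $\sum_{r>r_*}e^{-r^\eta} \le \int_{r_*-1}^{\infty}e^{-t^\eta}\,dt$ without the factor $e$; and for the incomplete Gamma function one can split into $1 < x < s$ (using $\Gamma(s,x) \le \Gamma(s)$) and $x \ge s$ (using the integration-by-parts bound $\Gamma(s,x) \le x^{s}e^{-x}/(x-s+1)$). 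This bookkeeping is routine but is where the actual work lies; in all of it the hypothesis $r_* > 1$ is indispensable, since as $r_* \to 0^{+}$ the claimed right-hand side vanishes while the left-hand side does not.
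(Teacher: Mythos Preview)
Your architecture matches the paper's: both bound the sum by a multiple of $\int_{r_*}^\infty t^{D-1}e^{-t^\eta}\,dt$, substitute $u=t^\eta$ to land on an incomplete Gamma integral, and then estimate that integral by $2^{s}\Gamma(s)\,x^{s}e^{-x}$ with $s=D/\eta$, $x=r_*^\eta$. The technical execution differs in two places. For the sum--to--integral step, the paper exploits that $t\mapsto t^{D-1}e^{-t^\eta}$ is unimodal (increasing up to its mode, decreasing after) to get a clean factor $2$; your mean--value--theorem comparison is slicker and needs no monotonicity, but costs a factor $e$. For the incomplete Gamma step, the paper rounds the exponent up to an integer $\beta$ and proves $\int_{x}^\infty u^{\beta}e^{-u}\,du\le 2^{\beta}\beta!\,x^{\beta}e^{-x}$ by induction via integration by parts, whereas you keep $s$ real and use a shift--and--split at $v=x$.

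Your bound is rigorous and gives the stated right--hand side up to an absolute multiplicative constant; the only genuine shortfall is hitting exactly the prefactor $2$. Your own diagnosis is accurate: the loss is the factor $e$ from step one, and it bites precisely when $\Gamma(D/\eta)$ is near its minimum (e.g.\ $D=1$, $\eta\approx 0.7$). The paper avoids this by getting the factor $2$ directly from unimodality, so if you want the constant as stated, the cleanest fix is to adopt that step rather than patch the $D=1$ corner separately. For the application in the paper the constant is immaterial (it is absorbed into an unspecified $C$), so your argument as written already suffices for the lemma's purpose.
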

\begin{proof}
Let $f(r) = r^{D-1}e^{-r^\eta}$.
Our strategy is to upper bound $\sum_{r>r_*}f(r)$ by an integral. 
For $r\in (0,\infty)$, $f$ has a maximum at $r=r_0 =(D-1)^{1/\eta} \eta^{-1/\eta} $.
Let $r_{0}^- = \floor{r_0}$ and $r_{0}^+ = r_0^-+1>r_0$.
Then, the function $f(r)$ is increasing for $r\in (r_*,r_0^-)$ and decreasing for $r\geq r_0^+$.
Therefore, we can upper bound:
\begin{align}
    \sum_{r>r_*} f(r) &\leq \int_{r_*}^{r_0^-} f(r) dr + \int_{r_0^+}^{\infty} f(r) dr + f(r_0^-)+f(r_0^+)\nonumber\\
    &\leq \int_{r_*}^{r_0^-} f(r) dr + \int_{r_0^+}^{\infty} f(r) dr + 2\int_{r_0^-}^{r_0^+} f(r)dr\nonumber\\
    &\leq 2\int_{r_*}^{\infty} f(r) dr,
\end{align}
where we use the fact that $f(r)$ is concave between $r_0^-$ and $r_0^+$ to bound the first line by the second line.
Next, to bound the integral, we make a change of variable to $x = r^{\eta}$ so that
\begin{align}
    2\int_{r_*}^{\infty} f(r) dr  
    &= 2\int_{r_*}^{\infty} r^{D-1}e^{-r^{\eta}} dr\nonumber\\
    &=\frac{2}{\eta}\int_{x_*}^{\infty} x^{\frac{D-\eta}{\eta}} e^{-x}  dx\nonumber\\
    &\leq\frac{2}{\eta}\int_{x_*}^{\infty} x^{\beta} e^{-x}  dx
    \leq \frac{2}{\eta}2^\beta \beta! x_*^{\beta}e^{-x_*} \nonumber\\
    &= \frac{2}{\eta}2^\beta\beta! r_*^{\eta \beta} e^{-r_*^{\eta}} \nonumber\\
    &\leq \frac{2}{\eta}2^{D/\eta}\Gamma(D/\eta) r_*^{D} e^{-r_*^{\eta}},
\end{align}
where $x_{*} = r_{*}^{\eta}$, $\beta = \ceil{{(D-\eta)}/{\eta}}\leq D/\eta$ is an integer, and $\Gamma$ is the Gamma function.
Note that we have also used a bound for the integral 
\begin{align}
    \int_{x_*}^{\infty} x^{\beta} e^{-x}  dx\leq 2^\beta \beta! x_*^\beta e^{-x_*},
\end{align}
which can be proven inductively on $\beta$ for all $\beta\geq 0$ and $x_*\geq 2$.
Indeed, the inequality is trivial for $\beta = 0$.
Suppose the inequality holds for $\beta-1$, using integration by parts, we have
\begin{align}
    \int_{x_*}^{\infty} x^{\beta} e^{-x}  dx 
    &= x_*^\beta e^{-x_*} + \beta \int_{x_*}^{\infty} x^{\beta -1} e^{-x} dx\nonumber\\
    &\leq x_*^\beta e^{-x_*} + \beta 2^{\beta-1}(\beta-1)! x_*^{\beta-1} e^{-x_*}\nonumber\\
    &\leq 2^{\beta-1}\left(\frac{1}{2^{\beta-1}\beta!}+\frac{1}{x_*}\right)\beta!x_*^{\beta} e^{-x_*} \nonumber\\
    &\leq 2^\beta\beta!x_*^{\beta} e^{-x_*}, 
\end{align}
where the terms inside the bracket in the second to last line is always less than or equal to 2 for all $x_*\geq 1$ (corresponding to $r_{*} > 1$).
\end{proof}

\end{document}